\newtheorem{Le}{Lemma}
\newtheorem{theo}{Theorem}
\theoremstyle{definition}
\newtheorem{re}{Remark} 
\newcommand{\xir}{x_{i_{r-1}}\cdots x_{i_0}}
\newcommand{\xx}{x_{i_{\overline{r}-1}}\cdots x_{i_0}}
\newcommand{\xs}{x_{i_{r-q-1}}\cdots x_{i_0}}
\newcommand{\xk}{x_{i_{r-k-1}}\cdots x_{i_0}}
\newcommand{\zuhe}{ \begin{pmatrix} i_s \\ t \end{pmatrix}}
\newcommand{\MN}{\mathcal{N}}
\newcommand{\CA}{\mathcal{C}(\mathcal{I})}
\newcommand{\ol}{\overline{\ell}}
\begin{document}

\title{On the Weight Spectrum Improvement of Pre-transformed Reed-Muller Codes and Polar Codes} 

%%%%%%
 \author{%
   \IEEEauthorblockN{Yuan Li\IEEEauthorrefmark{1}\IEEEauthorrefmark{2},
                     Zicheng Ye\IEEEauthorrefmark{1}\IEEEauthorrefmark{2},
                     Huazi Zhang\IEEEauthorrefmark{3},
                     Jun Wang\IEEEauthorrefmark{3},
                     Guiying Yan\IEEEauthorrefmark{1}\IEEEauthorrefmark{2},
                     and Zhiming Ma\IEEEauthorrefmark{1}\IEEEauthorrefmark{2}}
  \IEEEauthorblockA{\IEEEauthorrefmark{1}%
                     University of Chinese Academy of Sciences}
   \IEEEauthorblockA{\IEEEauthorrefmark{2}%
                     Academy of Mathematics and Systems Science, CAS }
   \IEEEauthorblockA{\IEEEauthorrefmark{3}%
                     Huawei Technologies Co. Ltd.}
    Email:  \{liyuan2018, yezicheng\}@amss.ac.cn, \{zhanghuazi, justin.wangjun\}@huawei.com,\\
           yangy@amss.ac.cn, mazm@amt.ac.cn 

 }

\maketitle

%%%%%
%% Abstract: 
%% If your paper is eligible for the student paper award, please add
%% the comment "THIS PAPER IS ELIGIBLE FOR THE STUDENT PAPER
%% AWARD." as a first line in the abstract. 
%% For the final version of the accepted paper, please do not forget
%% to remove this comment!
%%
\begin{abstract}
Pre-transformation with an upper-triangular matrix (including cyclic redundancy check (CRC), parity-check (PC) and polarization-adjusted convolutional (PAC) codes)  improves the weight spectrum of Reed-Muller (RM) codes and polar codes significantly. However, a theoretical analysis to quantify the improvement is missing. In this paper, we provide asymptotic analysis on the number of low-weight codewords of the original and pre-transformed RM codes respectively, and prove that pre-transformation significantly reduces low-weight codewords, even in the order sense. For polar codes, we prove that the average number of minimum-weight codewords does not increase after pre-transformation. Both results confirm the advantages of pre-transformation.
\end{abstract}

\section{Introduction}
\label{Introduction}
Polar codes \cite{b1}, invented by Ar{\i}kan, are a great breakthrough in coding theory. As code length approaches infinity, the synthesized channels become either noiseless or pure-noise. Channel polarization occurs under successive cancellation (SC) decoding, which has a low complexity. However, the performance of polar codes under SC decoding is poor at short to moderate block lengths.

To boost finited-length performance, a successive cancellation list (SCL) decoding algorithm was proposed \cite{b2}. 
As list size $L$ increases, the performance of SCL decoding approaches that of maximum-likehood (ML) decoding. But the ML performance of polar codes is still inferior due to low minimum distance.
Consequently, concatenation of polar codes with CRC \cite{b3} and PC \cite{b4} were proposed to improve weight spectrum.
In Ar{\i}kan’s PAC codes\cite{b5}, convolutional precoding and RM rate-profiling were applied to approach binary input additive white Gaussian noise (BIAWGN) dispersion bound \cite{b7} under large list decoding\cite{b8}.

CRC-Aided (CA) polar, PC-polar, and PAC codes can be viewed as pre-transformed polar codes with upper-triangular transformation matrices\cite{b9}. In polar codes, frozen bits are all zeros, while in pre-transformed polar codes, traditional frozen bits are replaced by dynamically frozen bits \cite{s1}, whose value depends on previous bits.  It is proved that any upper-triangular pre-transformation does not reduce minimum distance\cite{b9}. In \cite{s2}, efficient recursive formulas were proposed to calculate the average weight spectrum of pre-transformed polar codes with polynomial complexity rather than exponential complexity. 

In this paper, we simplify the recursive formulas in \cite{s2} through the monomial representation of row vectors. From \cite{b9}\cite{s2}, low-weight codewords are induced by low-weight row vectors.
% we further distinguish the number of low-weight codewords induced by different row vectors.  
We further prove that, low-weight codewords are mainly induced by a small fraction of low-weight row vectors. Based on this discovery, we provide asymptotic analysis on the number of low-weight codewords of pre-transformed codes, and  quantitatively analyze the improvement of weight spectrum.   

This paper is organized as follows. In section II, we review polar codes and pre-transformed polar codes. In section III, we analyze the number of low-weight codewords of the original and pre-transformed RM codes respectively. Asymptotic analysis shows that low-weight codewords reduce significantly after pre-transformation. For polar codes, we prove that the average number of minimum-weight codewords does not increase after pre-transformation, as long as the code is decreasing \cite{b21}. Finally we draw some conclusions in section IV. 

\section{Background}
\label{Background}
\subsection{Polar Codes as Monomial Codes}
Let
$
F=\left(\begin{array}{ll}
1 & 0 \\
1 & 1
\end{array}\right)
$, $N=2^m$, and $F_N=F^{\otimes m}$.  Starting from $N$ binary-input discrete memoryless channels (B-DMC) $W$, we obtain $N$ synthetic channels $W_N^{(i)}$ after polarization. Polar codes can be constructed by selecting the indices of $K$ most reliable information sub-channels, i.e., $K$ row vectors of $F_N$, as information set $\mathcal{I}$. Density evolution (DE) algorithm\cite{b10}, Gaussian approximation (GA) algorithm\cite{b11} and the channel-independent polarization weight (PW) construction method\cite{b12} are efficient methods to find reliable sub-channels.

After determining the information set $\mathcal{I}$, its complement set is called the frozen set $\mathcal{F}$. Let 
$u_1^N=(u_1,u_2, \dots , u_N)$ be the bit sequence to be encoded. $K$ bits are inserted into $u_{\mathcal{I}}$,
and all zeros are filled into $u_{\mathcal{F}}$. Then the codeword $c_1^N$ is obtained by $c_1^N=u_1^NF_N$.

Polar codes can also be expressed as monomial codes\cite{b21} and the monomial set is denoted by
$$\mathcal{M}_m \overset{def}{=} \{ x_{m-1}^{a_{m-1}}\cdots x_{0}^{a_{0}} |({a_{m-1}},\dots,{a_{0}}) \in \mathbf F_2^m \}.$$ 
%%and the evaluation of $f \in \mathcal{M}_m$ is 
%%$$\text{eval}(f)=(f(x))_{x \in \mathbf F_2^{2^m}}.$$
From this point of view, each row vector of $F_N$ corresponds to 
a monomial represented by $m$ binary variables \{$x_i$\}, $0 \leq i \leq m-1$. For instance, a monomial $f= x_{i_{r-1}}\cdots x_{i_{0}}$, where $0 \leq i_{0} <...<i_{r-1} \leq m-1$, the degree of $f$ is $r$, denoted by $deg(f)$. Denote $I_f = i$, if the monomial $f$ represents the $i$-$th$ row vector of $F_N$.

To be specific, let $(a_{m-1},...,a_0)$ be the binary representation of $N-i$, i.e., $N-i = \sum\limits_{j=0}^{m-1}2^ja_j$, then the $i$-$th$ row vector of $F_N$ can be represented by  monomial $ x_{m-1}^{a_{m-1}}\cdots x_0^{a_0}$. For example, the monomial representation of $F_8$ is shown in Fig. \ref{fig1}.

\begin{figure}[htbp]
\centerline{\includegraphics[width = .35\textwidth]{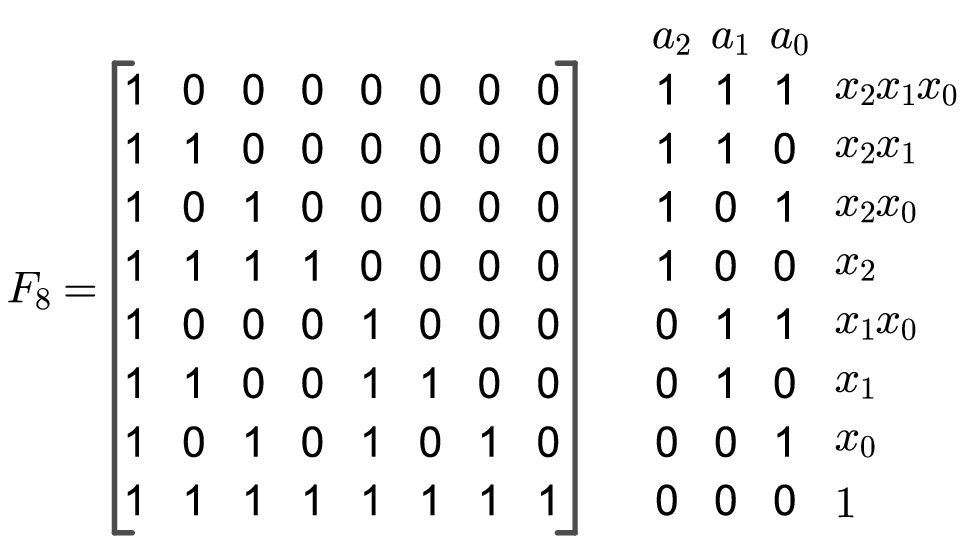}}
\caption{The monomial representation of $F_8$.}
\label{fig1}
\end{figure}

From now on, the $i$-$th$ row vector of $F_N$ and the corresponding monomial $x_{m-1}^{a_{m-1}}\cdots x_0^{a_0}$ are used interchangeably because they refer to the same thing.

\subsection{Decreasing Monomial Codes}
It was revealed in \cite{b21} and \cite{b31} that
the reliability of synthetic channels follows a partial order ``$\preceq$''. If $f,g\in \mathcal{M}_m$, $g \preceq f$ means $g$ is universally more reliable than $f$. 
For monomials of the same degree, partial order is defined as 
$$x_{i_{r-1}}\cdots x_{i_0} \preceq x_{j_{r-1}} \cdots x_{j_0} \iff i_s \leq j_s, \ 0 \leq s \leq r-1,$$ 
and for monomials of different degrees
$$g\preceq f \iff \exists f^*\mid f, \  deg(f^*)=deg(g), \  \text{and} \ g\preceq f^*.$$

Denote the monomial code with information set $\mathcal{I}$ by $\CA$, $\CA$ is a decreasing monomial code if $\mathcal{I}$ satisfies partial order, i.e.,
$$
\forall f \in \mathcal{I} \ \text{and} \ g\in \mathcal{M}_m,\ \text{if} \ g \preceq f \ \text{then} \  g \in \mathcal{I}. 
$$

For example, the information set of RM$(m,r)$ consists of monomials with degree no larger than $r$. By definition, RM codes are decreasing monomial codes. 
%In addition, polar codes with PW construction are decreasing monomial codes as well.

\subsection{Pre-Transformed Polar Codes}

\begin{equation*}
T
=\begin{bmatrix}
1  &  T_{12}  & \cdots\ &T_{1N}\\
0  &  1  & \cdots\ & T_{2N}\\
 \vdots   & \vdots & \ddots  & \vdots  \\
 0 & 0  & \cdots\ & 1\\
\end{bmatrix}.
\end{equation*}

Let $G_N = TF_N$ be the generator matrix of pre-transformed polar codes, where $T$ is an upper triangular pre-transformation matrix defined above. The codeword of the pre-transformed polar codes is given by $c_1^N=u_1^NG_N=u_1^NTF_N$. In PAC codes, the pre-transformation matrix $T$ is a Toeplitz matrix. 

In \cite{r4}, a deterministic algorithm for computing the weight spectrum of any given pre-transformed polar code was proposed. However, the computational complexity is still exponential.  In \cite{s2}, an efficient algorithm was proposed to compute the average weight spectrum of pre-transformed polar codes with polynomial complexity. The code ensemble assumes that $T_{ij}$, $1\leq i < j \leq N$ are $i.i.d.$  $Bernoulli(\frac{1}{2})$ $r.v.$.

\section{The number of low-weight codewords}
In this section, we provide asymptotic analysis on the number of low-weight codewords of the original and pre-transformed RM codes respectively. We prove that, for decreasing polar codes, the average number of minimum-weight codewords after pre-transformation  is no larger than that of the original codes.
\subsection{Notations and definitions}
In this paper, $\log x$ is the base-2 logarithm of $x$, $\lceil x \rceil$ is the ceiling function of $x$, and $w(x)$ is the Hamming weight of $x$. The entropy function $h(x) = -x\log x -(1-x)\log(1-x)$, $0 < x < 1$, and $|\mathcal{S}|$ is the cardinality of set $\mathcal{S}$. To characterize asymptotic results, we define the following notations\footnote{ $f(x) \leq O(g(x))$ $(f(x) \geq \Omega(g(x)))$, if $\limsup\limits_{x \rightarrow \infty}\frac{f(x)}{g(x)} < +\infty$ $\left(\liminf\limits_{x \rightarrow \infty}\frac{f(x)}{g(x)} > 0\right)$, where $g(x) > 0$. $f(x) = \Theta(g(x))$, if $0 <\liminf\limits_{x \rightarrow \infty}|\frac{f(x)}{g(x)}| \leq \limsup\limits_{x \rightarrow \infty}|\frac{f(x)}{g(x)}| < + \infty$, $f(x) = o(g(x))$, if $\lim\limits_{x \rightarrow \infty}|\frac{f(x)}{g(x)}| = 0$.}.

Let $f_N^{(i)}$ be the $i$-$th$ row vector of $F_N$, and $g_N^{(i)}$ be the $i$-$th$ row vector of $G_N$. Information set $\mathcal{I} = \left\{ I_1,I_2,\dots,I_K \right\}$, where $I_1 < ... < I_K$. The number of codewords with Hamming weight $d$ of polar/RM codes is denoted by $N(d)$, and the number of codewords with Hamming weight no larger than $d$ is denoted by $A(d)$. The minimum distance is denoted by $d_{min}$. The corresponding number of codewords of  pre-transformed codes with transformation matrix $T$ is denoted by $N(d,T)$ and $A(d,T)$, respectively. The average number is denoted by $E(N(d,T))$ and $E(A(d,T))$, where the expectation is with respect to random pre-transformation matrix $T$, and we assume $T_{ij}$, $1\leq i < j \leq N$ are $i.i.d.$  $Bernoulli(\frac{1}{2})$ $r.v.$.

Let $P(m,i,d) \overset{def}{=} P\left(w\left(g_{2^m}^{(i)}\right)=d\right)$ be the probability that the $i$-$th$ row vector of $G_N$ has Hamming weight $d$. By \cite[Lemma 2]{s2}, $P(m,i,d)=0$, if $d < w(f_N^{(i)})$, i.e., pre-transformation does not reduce the Hamming weight of row vectors.  According to \cite[Lemma 1]{s2}, the probability that the codeword $c_1^N = u_1^N G_N$ has Hamming weight $d$ is equal to $P(m,I_j,d)$, as long as $u_{I_j}$ is the first non-zero bit in $u_1^N$, thus we can combine the weight-$d$ codewords induced by these $2^{K-j}$ codewords in $E\left( N(d,T)\right)$ whose first non-zero bit is $u_{I_j}$. 
Let $N\left(m,I_j,d\right)\overset{def}{=}2^{K-j}P\left(m,I_j,d\right)$, according to \cite[eq.(7)]{s2},
\begin{align}
\label{eq1}
E\left( N(d,T)\right) & = \sum_{\mbox{\tiny$\begin{array}{c}
1 \leq j \leq K\\
w(f_{I_j}) \leq d\end{array}$}}2^{K-j}P \left(m,I_j,d \right) \notag \\
& = \sum_{\mbox{\tiny$\begin{array}{c}
1 \leq j \leq K\\
w(f_{I_j}) \leq d\end{array}$}}N \left(m,I_j,d \right), 
\end{align}
where $K-j$ is the number of information bits whose indices are greater than $I_j$. As explained above, $N\left(m,I_j,d\right)$ is the number of weight-$d$ codewords where $u_{I_j}$ is the first non-zero bit in the encoded bit sequence $u_1^N$. We call $N\left(m,I_j,d\right)$ the number of weight-$d$ codewords induced by the $I_j$-$th$ row vector. From (\ref{eq1}), all weight-$d$ codewords are induced by row vectors $f_N^{(i)}$ with weight no larger than $d$.  Therefore, when analyzing the number of weight-$d$ codewords in pre-transformed codes, we only need to consider the row vectors with weight no larger than $d$.

For convenience, we use $P(m,\xir,d)$ and $N(m,\xir,d)$ instead of $P(m,i,d)$ and $N(m,i,d)$ when $\xir$ represents the $i$-$th$ row vector of $F_N$.

\subsection{Low-weight codewords of RM codes}
In this section, we analyze low-weight codewords with Hamming weight within a constant multiple of minimum distance. We provide asymptotic analysis on the number of codewords in RM$(m,r)$ with Hamming weight no larger than $2^{m-r+k}$, where $k$ is a non-negative integer. The proof idea of Theorem \ref{thy} follows from \cite{y1} and \cite{y2}.

\begin{theo} 
\label{thy}
Assume $0< \alpha_1 < \frac{r}{m} < \alpha_2 <1$, where $\alpha_1$, $\alpha_2$ are constants, 
\begin{align}
\label{eqy1}
\Omega(m^{k+1}) \leq \log A\left(2^{m-r+k}\right) \leq O(m^{k+2}).
\end{align}
\end{theo}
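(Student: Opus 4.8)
The plan is to establish the two inequalities separately.

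\emph{Lower bound.} To get $\log A(2^{m-r+k})\ge\Omega(m^{k+1})$ I would write down an explicit large family of codewords of weight at most $2^{m-r+k}$. Fix coordinates $x_0,\dots,x_{m-1}$, set $n:=m-r+k+1$, and for each polynomial $g$ in $x_0,\dots,x_{n-1}$ of degree at most $k+1$ put $f_g:=(x_n x_{n+1}\cdots x_{m-1})\,g$, the product of the $r-k-1$ variables $x_n,\dots,x_{m-1}$ with $g$. Then $\deg f_g\le(r-k-1)+(k+1)=r$, so $f_g\in$ RM$(m,r)$, and the support of $f_g$ is contained in the affine subspace $\{x_n=\cdots=x_{m-1}=1\}$, which has $2^{n}$ points; hence $w(f_g)$ equals the number of points of $\mathbf{F}_2^{n}$ at which $g=1$. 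Since $k+1\ge1$, the involution $g\mapsto g+1$ preserves the degree bound and replaces this count by its complement in $\{0,1,\dots,2^{n}\}$, so at least half of the $2^{\binom{n}{\le k+1}}$ admissible $g$ satisfy $w(f_g)\le 2^{n-1}=2^{m-r+k}$; moreover distinct $g$ give distinct $f_g$ (restrict to the subspace). As $0<\alpha_1<r/m<\alpha_2<1$ forces $n=\Theta(m)$ with $k$ constant, $\binom{n}{\le k+1}\ge\binom{n}{k+1}=\Theta(m^{k+1})$, and therefore $A(2^{m-r+k})\ge 2^{\binom{n}{\le k+1}-1}$, i.e.\ $\log A(2^{m-r+k})\ge\Omega(m^{k+1})$. (The family can be enlarged by varying the affine subspace too, but this is not needed.)

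\emph{Upper bound.} For $\log A(2^{m-r+k})\le O(m^{k+2})$ I would follow the structural enumeration of \cite{y1} and \cite{y2}. The key ingredient is a structure theorem, roughly of the form: after an affine change of variables, every codeword of RM$(m,r)$ of weight at most $2^{m-r+k}$ is an $\mathbf{F}_2$-sum of $O(1)$ products, each a product of $s$ linearly independent affine forms with a polynomial of degree at most $r-s$ in the remaining $m-s$ variables, where $s\ge r-k-2$ (so the residual degree is at most $k+2$ and $m-s=\Theta(m)$). Enumerating over all such descriptions bounds $A(2^{m-r+k})$ by the product, raised to a bounded power, of the number of choices of the affine forms and the number of choices of the residual polynomials. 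The former is at most $(2^{m+1})^{O(r)}=2^{O(m^{2})}$; the latter is at most $2^{\binom{m}{\le k+2}}=2^{O(m^{k+2})}$ since $k$ is constant. Because $r=\Theta(m)$, the factor $2^{O(m^{2})}$ is absorbed, yielding $A(2^{m-r+k})\le 2^{O(m^{k+2})}$. (For $k=0$ this count is the number of minimum-weight codewords plus one, whose logarithm is $\Theta(r(m-r))=\Theta(m^{2})$, so the exponent $k+2$ is then sharp.)

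\emph{Main obstacle.} The upper bound is the hard part. A direct induction on $m$ via the split $f=f_0+x_m f_1$ with $w(f)=w(f_0)+w(f_0+f_1)$ fails, because passing to an affine hyperplane halves the weight and the minimum distance at once, so the ``level'' $k$ never decreases and no useful base case is reached; one genuinely needs a structure theorem that first localizes the support onto $O(1)$ subspaces of codimension close to $r$. Establishing such a theorem for weights up to a \emph{constant multiple} of $d_{min}$ (rather than only below $2d_{min}$, as in the classical Kasami--Tokura analysis) and, above all, controlling the additive constant in that codimension tightly enough --- getting it down to $2$, so that the residual polynomial has degree at most $k+2$ and contributes only $2^{O(m^{k+2})}$ to the count, rather than $2^{O(m^{k+c})}$ for a larger $c$ --- is the delicate step, and is exactly where the techniques of \cite{y1} and \cite{y2} are used. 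A final, routine point is to check that the (generally coarser) estimates of those works specialize, under $\alpha_1<r/m<\alpha_2$ with $k$ fixed, to the stated exponents $m^{k+1}$ and $m^{k+2}$.
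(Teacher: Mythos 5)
Your lower bound is sound and is essentially the paper's own construction: the paper takes $f=(g(x_0,\dots,x_{h-1})+x_h)x_{h+1}\cdots x_{m-1}$ with $\deg g=k+1$ and $h=m-r+k$, so each $f$ has weight exactly $2^{m-r+k}$, while you use the complementation trick $g\mapsto g+1$ to argue that at least half of the degree-$\le k+1$ choices give weight $\le 2^{m-r+k}$; both yield $2^{\Theta(m^{k+1})}$ distinct codewords, and your distinctness and degree checks are correct.

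The upper bound, however, has a genuine gap. Your entire count rests on an \emph{exact} structure theorem: that every codeword of RM$(m,r)$ of weight at most $2^{m-r+k}$ is, after an affine change of variables, an $\mathbf F_2$-sum of $O(1)$ products of at least $r-k-2$ independent affine forms with residual polynomials of degree at most $k+2$. You do not prove this, and it is not what \cite{y1} or \cite{y2} provide: exact structural classifications of low-weight codewords are only known up to $2.5\,d_{min}$ (Kasami--Tokura--Azumi \cite{y3}), and for weights up to a general constant multiple $2^k d_{min}$ the results of \cite{y1}, \cite{y2} are covering/counting statements proved via derivatives, not decompositions of the codeword itself. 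So the ``delicate step'' you defer is precisely the missing proof, and it is harder than the theorem being proved. The paper's actual argument sidesteps any exact structure: by \cite[Corollary 3.1]{y1}, the set $\mathcal A_{r-k-1,t}$ of majorities of $t$ iterated derivatives (each $(r-k-1)$-fold derivative having degree $\le k+1$) is a $\delta$-net for $\operatorname{poly}(2^{m-r+k})$ with $t=\lceil 17\log(2^m/\delta)\rceil$; choosing $\delta=2^{m-r-2}$ makes $2\delta<d_{min}$, so distinct low-weight codewords cannot share a nearby net element, giving the injection
\begin{align*}
A\left(2^{m-r+k}\right)\;\le\;|\mathcal A_{r-k-1,t}|\;\le\;2^{\,t\sum_{s=0}^{k+1}\binom{m}{s}}=2^{O(m^{k+2})},
\end{align*}
with $t=17(r+2)=O(m)$. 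To repair your proposal you should either substitute this net-plus-packing argument for the assumed decomposition, or supply a full proof of the structure theorem you invoke; as written, the upper bound is not established.
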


\begin{proof}
The proof is in Appendix \ref{ap1}.
\end{proof}

\begin{re}
\label{rey}
When $k=0$, $ \log A\left(2^{m-r}\right) = \Theta(m^2)$ \cite{b21} reaches the upper bound of (\ref{eqy1}), and when $k=1$, $ \log A\left(2^{m-r+1}\right) = \Theta(m^2)$ \cite{y3} reaches the lower bound of (\ref{eqy1}) .
\end{re}

\subsection{Minimum-weight codewords of pre-transformed RM codes}
\label{se1}
According to (\ref{eq1}), the average number of minimum-weight codewords of pre-transformed RM$(m,r)$ is  
\begin{align}
\label{eq2}
&E\left( N(2^{m-r},T)\right)  \notag \\  
 =&\sum_{0 \leq i_0 <...<i_{r-1} \leq m-1} N(m,\xir,2^{m-r}).
\end{align}
Thus we first analyze the number of minimum-weight codewords induced by $\xir$.

\begin{Le}
\label{le1}
In RM($m,r$), the number of information bits whose indices are greater than $I_{\xir}$ is
$\sum\limits_{s=0}^{r-1}\sum\limits_{t=0}^{s+1}\zuhe$, and 
\begin{align}
\label{eq3}
\log P(m,\xir,2^{m-r}) = \sum_{s=0}^{r-1}\left(2^{i_s-s}-2^{i_s}\right).
\end{align} 
Thus $\log N(m,\xir,2^{m-r}) =$
\begin{align}
\label{eq4}
 \sum_{s=0}^{r-1}\left(2^{i_s-s}-2^{i_s}+\sum\limits_{t=0}^{s+1}\zuhe\right).
\end{align}
\end{Le}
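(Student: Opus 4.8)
The plan is to prove the three assertions in turn, observing first that \eqref{eq4} follows immediately from the other two: by definition $N(m,\xir,2^{m-r})=2^{K-j}\,P(m,\xir,2^{m-r})$, where $K-j$ is exactly the number of information bits of index larger than $I_{\xir}$, so $\log N(m,\xir,2^{m-r})$ is the sum of that count and $\log P(m,\xir,2^{m-r})$ from \eqref{eq3}.

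For the count, I would use the index--monomial correspondence of Section~\ref{Background}: the monomial $\xir$ occupies row $I_{\xir}=2^m-\sum_{s=0}^{r-1}2^{i_s}$, so a monomial $g$, with exponent vector $b\in\mathbf F_2^m$, has larger index precisely when the integer whose binary digits are $b$ is strictly less than $\sum_{s=0}^{r-1}2^{i_s}$. I would enumerate the information monomials (those of degree at most $r$) of larger index by the highest coordinate $i_s$ at which $b$ disagrees with the exponent vector of $\xir$: such a $g$ is forced to equal $\bigl(x_{i_{s+1}}\cdots x_{i_{r-1}}\bigr)\,h$ for a squarefree monomial $h$ in $x_0,\dots,x_{i_s-1}$, and the constraint $\deg g\le r$ becomes $\deg h\le s+1$. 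This gives $\sum_{t=0}^{s+1}\zuhe$ monomials for each $s$, and the $r$ families are pairwise disjoint and exhaust the larger-index monomials, hence $\sum_{s=0}^{r-1}\sum_{t=0}^{s+1}\zuhe$ in total.

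The substance of the lemma is \eqref{eq3}. First I would make the random row explicit. The rows of $F_{2^m}$ of index larger than $I_{\xir}$ are exactly the monomials of larger index, so the same partition by highest disagreeing coordinate --- now without any degree restriction --- gives $g_{2^m}^{(I_{\xir})}=\xir+\sum_{s=0}^{r-1}\bigl(x_{i_{s+1}}\cdots x_{i_{r-1}}\bigr)R_s$, where $R_s$ is the random polynomial assembled from the relevant entries of $T$ in the variables $x_0,\dots,x_{i_s-1}$. Since the $r$ monomial families are disjoint, the $R_s$ involve disjoint Bernoulli entries of $T$, so they are independent, and each $R_s$ is a uniformly random Boolean function of coordinates $0,\dots,i_s-1$ (these monomials form a basis of that space). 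I would then prove, by induction on $r$, the slightly more general claim that in any ambient dimension $n>i_{r-1}$ the polynomial $\xir+\sum_s\bigl(x_{i_{s+1}}\cdots x_{i_{r-1}}\bigr)R_s$, viewed as a function on $\mathbf F_2^n$, has weight at least $2^{n-r}$ and attains $2^{n-r}$ with probability $\prod_{s=0}^{r-1}2^{2^{i_s-s}-2^{i_s}}$; the case $n=m$ is \eqref{eq3}. The cases $r\le 1$ are direct, since $x_{i_0}+R_0$ always has weight $2^{n-1}$ (pair up the two halves $x_{i_0}=0$ and $x_{i_0}=1$, where $R_0$ agrees and $x_{i_0}+R_0$ differs). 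For the inductive step I would split $\mathbf F_2^n$ by the value of $x_{i_{r-1}}$: on $x_{i_{r-1}}=0$ the row restricts to $R_{r-1}$, and on $x_{i_{r-1}}=1$ it restricts to $G'+R_{r-1}$, where $G'=x_{i_{r-2}}\cdots x_{i_0}+\sum_{s=0}^{r-2}\bigl(x_{i_{s+1}}\cdots x_{i_{r-2}}\bigr)R_s$ is precisely the object of the induction hypothesis in dimension $i_{r-1}$. Since $G'$ and $R_{r-1}$ depend only on coordinates below $i_{r-1}$, a short count gives that the total weight equals $2^{\,n-1-i_{r-1}}\bigl(\,|\{G'=1\}|+2\,|\{G'=0,\,R_{r-1}=1\}|\,\bigr)$, all cardinalities taken over $\mathbf F_2^{i_{r-1}}$; with the lower bound $|\{G'=1\}|\ge 2^{i_{r-1}-(r-1)}$ supplied by the induction hypothesis (equivalently \cite[Lemma 2]{s2}), the value $2^{n-r}$ is attained if and only if $G'$ has minimum weight $2^{i_{r-1}-(r-1)}$ \emph{and} the support of $R_{r-1}$ lies inside $\{G'=1\}$. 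Conditioning on the first event, $\{G'=1\}$ has exactly $2^{i_{r-1}-(r-1)}$ points, its complement has $2^{i_{r-1}}-2^{i_{r-1}-(r-1)}$ points, and since $R_{r-1}$ is independent of $G'$ the conditional probability of the second event is $2^{-(2^{i_{r-1}}-2^{i_{r-1}-(r-1)})}$; multiplying by the inductive probability $\prod_{s=0}^{r-2}2^{2^{i_s-s}-2^{i_s}}$ yields $\prod_{s=0}^{r-1}2^{2^{i_s-s}-2^{i_s}}$, and taking base-$2$ logarithms gives \eqref{eq3}.

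The step I expect to demand the most care is the inductive one: pinning down the decomposition $g_{2^m}^{(I_{\xir})}=\xir+\sum_s(x_{i_{s+1}}\cdots x_{i_{r-1}})R_s$ together with the independence of the $R_s$, keeping straight which coordinates $G'$ and each $R_s$ depend on, and checking that passing to the sub-cube $\mathbf F_2^{i_{r-1}}$ only rescales Hamming weights while preserving the property ``is a minimum-weight word'', so that the induction hypothesis applies verbatim to $G'$. Once that is in place the recursion is routine, and \eqref{eq4} follows by adding the count of the second paragraph to \eqref{eq3}.
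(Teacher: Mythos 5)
Your proposal is correct, but it reaches the lemma by a genuinely different route than the paper. The paper proves both claims by induction on $m$: it splits according to whether $i_{r-1}=m-1$ (monomial in the top half of $F_N$) or $i_{r-1}<m-1$ (lower half), counts the information bits with larger index half-by-half using the inductive hypothesis for RM$(m-1,r-1)$ or RM$(m-1,r)$, and obtains (\ref{eq3}) by plugging into the recursive formulas of \cite[Theorem 1]{s2}, which are taken as given. You instead (i) count the larger-index information monomials directly, partitioning by the highest coordinate at which the exponent vector disagrees with that of $\xir$ — a clean one-shot enumeration that replaces the paper's induction on $m$ — and (ii) re-derive the probability statement from scratch: you write the random row explicitly as $\xir+\sum_{s}(x_{i_{s+1}}\cdots x_{i_{r-1}})R_s$ with independent, uniformly random Boolean functions $R_s$ on the low coordinates, and induct on the degree $r$, identifying the minimum-weight event as ``$G'$ has minimum weight and $\mathrm{supp}(R_{r-1})\subseteq\{G'=1\}$'' and computing its probability $2^{2^{i_{r-1}-(r-1)}-2^{i_{r-1}}}$ by conditioning. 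Your argument is self-contained (it effectively reproves the special case $d=2^{m-r}$ of the recursion in \cite{s2} rather than citing it) and exposes the structure of the minimum-weight configurations; the paper's route is shorter given the citation and keeps the bookkeeping in the same recursive form (top/bottom half of $F_N$) that is reused later in the proof of Theorem \ref{th3}. All the delicate points you flag — independence of the $R_s$ via disjoint entries of $T$, uniformity of each $R_s$ because the squarefree monomials in $x_0,\dots,x_{i_s-1}$ form a basis, and the weight identity $2^{\,n-1-i_{r-1}}\bigl(|\{G'=1\}|+2|\{G'=0,R_{r-1}=1\}|\bigr)$ — check out, so I see no gap.
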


\begin{proof}
We prove Lemma \ref{le1} via induction on $m$, the proof is in Appendix \ref{ap2}. 
\end{proof}

\begin{re}
In Lemma \ref{le1}, we simplify the recursive formulas in \cite[Theorem 1]{s2} through the monomial representation of the $i$-$th$ row vector of $F_N$, this simplified form is convenient for the further theoretical analysis. As seen, $P(m,\xir,2^{m-r})$ holds for all sub-channel selections, thus (\ref{eq3}) will also apply to polar codes.
\end{re}
Based on Lemma \ref{le1}, we provide asymptotic analysis on $N(m,\xir,2^{m-r})$ as well as $E\left( N(2^{m-r},T)\right)$.

\begin{theo}
\label{th1}
\begin{align}
\label{eq7}
\log N(m,\xir,2^{m-r}) \leq  \notag \\
\begin{cases}
0 & i_{r-1} \geq r+3, and \\ 
& r \ \text{sufficiently large}; \\
2r+3 & i_{r-1} = r+2; \\
3r & i_{r-1} \leq r+1.
\end{cases}
\end{align}
Assume $ m-r \geq 2$, $\frac{r}{m}>\gamma$, where $\gamma > 0$ is a constant, then
\begin{align}
\label{eq11}
3r \leq \log E\left( N(2^{m-r},T) \right) \leq 3r+O(\log r).
\end{align}
\end{theo}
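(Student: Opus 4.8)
The plan is to derive everything from the closed form in Lemma~\ref{le1}. First I would reparametrize: setting $\delta_s:=i_s-s$, the chain $i_0<\dots<i_{r-1}\le m-1$ becomes $0\le\delta_0\le\dots\le\delta_{r-1}\le m-r$, and the binomial symmetry $\sum_{t=0}^{s+1}\binom{s+\delta}{t}=2^{s+\delta}-\sum_{t=0}^{\delta-2}\binom{s+\delta}{t}$ turns \eqref{eq4} into $\log N(m,\xir,2^{m-r})=\sum_{s=0}^{r-1}a_s$ with $a_0=1+\delta_0$ and, for $s\ge1$, $a_s=2^{\delta_s}-\sum_{t=0}^{\delta_s-2}\binom{s+\delta_s}{t}$. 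I would then record the elementary facts: for $s\ge1$, $a_s=1,2,3$ when $\delta_s=0,1,2$, $a_s=4-s$ when $\delta_s=3$, the increments in $\delta_s$ are eventually negative, hence $a_s\le3$ for every $\delta_s$, $a_s\le0$ once $\delta_s\ge4$, and $a_s$ drops (exponentially) when $\delta_s$ or $s$ is large.

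Next I would prove the case bound \eqref{eq7} by splitting on $\delta_{r-1}=i_{r-1}-(r-1)$. If $i_{r-1}\le r+1$ then every $\delta_s\le2$, so $a_0\le3$, $a_s\le3$, and $\sum_sa_s\le3r$. If $i_{r-1}=r+2$ then $\delta_{r-1}=3$, the top term is $a_{r-1}=4-(r-1)=5-r$, while $a_0\le1+\delta_0\le4$ and $a_s\le3$ for $1\le s\le r-2$, so $\sum_sa_s\le4+3(r-2)+(5-r)=2r+3$. If $i_{r-1}\ge r+3$, i.e. $\delta_{r-1}\ge4$, I would split on $\delta_0$: if $\delta_0\le3$, the top term $a_{r-1}=2^{\delta_{r-1}}-\sum_{t=0}^{\delta_{r-1}-2}\binom{(r-1)+\delta_{r-1}}{t}\le2^{\delta_{r-1}}-\binom{(r-1)+\delta_{r-1}}{\delta_{r-1}-2}$ is $-\Theta(r^2)$ (the binomial is increasing in $\delta_{r-1}$, hence smallest at $\delta_{r-1}=4$), which swamps $a_0+\sum_{1\le s\le r-2}a_s\le4+3(r-2)$; if $\delta_0\ge4$, then all $a_s\le0$ for $s\ge1$, and using that $a_1,a_2$ are decreasing in their argument once it is $\ge4$ one gets $a_1\le-2^{\delta_0}+\binom{\delta_0+1}{2}+\delta_0+2$ and $a_2\le-6$, so $a_0+a_1+a_2\le0$ already ($\delta_0=4$: $5+0-6<0$; $\delta_0\ge5$: $a_0+a_1<0$). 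Either way $\log N\le0$ for $r$ large.

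For \eqref{eq11}, the lower bound is immediate: since $m-r\ge2$, the degree-$r$ monomial with $i_s=s+2$ (all $\delta_s=2$) lies in $\mathrm{RM}(m,r)$ and has $\log N=3r$ by \eqref{eq4}, so \eqref{eq2} gives $E(N(2^{m-r},T))\ge2^{3r}$. For the upper bound I would split \eqref{eq2} by $v:=i_{r-1}$. The monomials with $v\le r+1$ number at most $\binom{r+2}{2}$ and each has $N\le2^{3r}$; those with $v=r+2$ number at most $\binom{r+2}{3}$ and each has $N\le2^{2r+3}$; so these two ranges contribute $O(r^2)2^{3r}$. For $v\ge r+3$ I would peel off the top variable in \eqref{eq4}: $\log N=a_{r-1}(v)+\log N'$, where $N'$ is the analogous weight count of the degree-$(r-1)$ submonomial and $a_{r-1}(v)=2^{v-r+1}-2^v+\sum_{t=0}^r\binom{v}{t}=2^{v-r+1}-\sum_{l=0}^{v-r-1}\binom{v}{l}$; summing over all submonomials gives $\sum_{i_{r-1}=v}N=2^{a_{r-1}(v)}E(N(2^{v-r+1},T))$ for $\mathrm{RM}(v,r-1)$, which by \eqref{eq7} applied in degree $r-1$ is $\le2^{a_{r-1}(v)}\binom{v}{r-1}2^{3(r-1)}$.

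It then remains to bound $a_{r-1}(v)$ from above for $v\ge r+3$. I would compute $a_{r-1}(r+3)=12-r-\binom{r+3}{2}$ and show $v\mapsto a_{r-1}(v)$ is non-increasing for $v\ge r+3$: writing $k=v-r-1$, its increment is $2^{k+2}-\binom{r+k+2}{k+1}-\sum_{l=0}^{k-1}\binom{r+k+1}{l}$, which is $\le0$ for $r$ large because $\binom{r+k+2}{k+1}\ge2^{k+2}$ for $2\le k\le r+1$ and $\sum_{l=0}^{k-1}\binom{r+k+1}{l}\ge2^{k+2}$ (a majority of the terms) for $k\ge r+2$. Hence $a_{r-1}(v)\le12-r-\binom{r+3}{2}\le-r^2/2$ for $r\ge r_0$ and all $v\ge r+3$; combined with $v<r/\gamma$ (so $\binom{v}{r-1}\le2^v<2^{r/\gamma}$) and fewer than $r/\gamma$ relevant values of $v$, the entire $v\ge r+3$ part is $\le(r/\gamma)\,2^{-r^2/2+r/\gamma+3r}=o(1)$. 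Altogether $E(N(2^{m-r},T))\le O(r^2)2^{3r}$, i.e. $\log E(N(2^{m-r},T))\le3r+O(\log r)$. I expect the main obstacle to be exactly this last point: the $\binom{m}{r}-\Theta(r^3)$ monomials carrying a high-index top variable each contribute only $\le2^0=1$ in \eqref{eq7}, and one must see they are collectively negligible — which is what forces the telescoping identity for $\sum_{i_{r-1}=v}N$, the binomial estimate making $a_{r-1}(v)$ quadratically negative, and the use of $r/m>\gamma$; the $\delta_0\ge4$ branch of \eqref{eq7} also needs its own short verification.
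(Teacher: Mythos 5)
Your proposal follows the same skeleton as the paper's proof: both start from the closed form \eqref{eq4}, write $\log N(m,\xir,2^{m-r})=\sum_{s}\MN(i_s,s)$ with $\MN(i_s,s)=2^{i_s-s}-\sum_{t=0}^{i_s-s-2}\binom{i_s}{t}$ (your $a_s$), split on $i_{r-1}$ into the ranges $\le r+1$, $=r+2$, $\ge r+3$, bound \eqref{eq2} by the three corresponding partial sums, and obtain the lower bound from the single monomial $x_{r+1}\cdots x_2$. The genuine differences are in the hard range $i_{r-1}\ge r+3$. There the paper bounds each term by $2^{-r^{3/2}/32}$ via an entropy estimate on $\binom{i_s}{i_s-s-2}$ (cases 3--5) plus computer search for small $i_s$, and multiplies by the crude count $2^m\le 2^{r/\gamma}$; you instead factor $\sum_{i_{r-1}=v}N=2^{\MN(v,r-1)}\cdot E(N)$ of RM$(v,r-1)$ and show $\MN(v,r-1)\le 12-r-\binom{r+3}{2}$ for every $v\ge r+3$ by proving the increment $2^{k+2}-\binom{r+k+2}{k+1}-\sum_{l<k}\binom{r+k+1}{l}$ is nonpositive. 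This is cleaner, gives a stronger $-\Theta(r^2)$ decay than the paper's $-r^{3/2}/32$, and your $\delta_0\ge4$ branch likewise replaces the paper's computer-verified inequality \eqref{eq17} with an explicit pairing of $a_0$ against $a_1,a_2$. Both arguments use $r/m>\gamma$ in the same way to kill the tail.

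There is one local flaw. In your proof of the $i_{r-1}\ge r+3$ case of \eqref{eq7} (branch $\delta_0\le3$) you justify $a_{r-1}\le 2^{\delta_{r-1}}-\binom{(r-1)+\delta_{r-1}}{\delta_{r-1}-2}\le-\Theta(r^2)$ by noting the binomial is increasing in $\delta_{r-1}$, hence at least $\binom{r+3}{2}$. That is a non sequitur: $2^{\delta_{r-1}}$ increases too, and since only $r>\gamma m$ is assumed, $\delta_{r-1}$ can be as large as $m-r\approx(1/\gamma-1)r$. Already for $\delta_{r-1}\approx 2\log r$ the quantity $2^{\delta_{r-1}}-\binom{r+3}{2}$ is positive, and for $\delta_{r-1}$ of order $r/\gamma$ with small $\gamma$ even the full single term $\binom{(r-1)+\delta_{r-1}}{\delta_{r-1}-2}$ is dwarfed by $2^{\delta_{r-1}}$, so discarding all but one term of the sum genuinely loses the statement. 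The fix is either to keep the whole partial sum (once $\delta_{r-1}-2$ exceeds half of $r-1+\delta_{r-1}$ it exceeds $2^{r-2+\delta_{r-1}}$), as the paper's case 3 does, or simply to invoke the monotonicity lemma for $\MN(v,r-1)$ that you prove later for \eqref{eq11} --- which does establish $\MN(i_{r-1},r-1)\le 12-r-\binom{r+3}{2}$ for all $i_{r-1}\ge r+3$ and should be stated up front. A smaller point of the same flavor: bounding $E(N(2^{v-r+1},T))$ of RM$(v,r-1)$ by $\binom{v}{r-1}2^{3(r-1)}$ uses all three cases of \eqref{eq7} in degree $r-1$, including the ``$r$ sufficiently large'' case, which is harmless asymptotically but should be made explicit.
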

\begin{proof}
The proof is in Appendix \ref{ap3}. We briefly introduce the proof outline below.

Firstly, we prove (\ref{eq7}) by \emph{Step 1-2}.

\emph{Step 1}, to further calculate (\ref{eq4}), let
\begin{align}
\label{eq8}
\MN(i_s,s) & = 2^{i_s-s}-2^{i_s}+\sum\limits_{t=0}^{s+1}\zuhe \notag \\
&= 2^{i_s-s} - \sum\limits_{t=0}^{i_s-s-2}\zuhe,
\end{align}
we prove $\MN(i_s,s) \leq 0$ if $s \geq 1$, $i_s-s \geq 3$. We analyze $\MN(i_s,s)$ according to the value of $s$, the proof is mainly based on the estimation of combinatorial number.

\emph{Step 2}, since $\log N(m,\xir,2^{m-r})=\sum\limits_{s=0}^{r-1} \MN(i_s,s)$, based on \emph{Step 1}, we analyze $\log N(m,\xir,2^{m-r})$ with respect to $i_{r-1}$, the proof details can be found in Appendix \ref{ap3}.

Next, we prove (\ref{eq11}) by \emph{Step 3}.

\emph{Step 3}, we divide the sum terms in (\ref{eq2}) into three parts according to $i_{r-1}$: $\sum\limits_{i_{r-1} \geq r+3}N(m,\xir,2^{m-r})$, \\ $\sum\limits_{i_{r-1} = r+2} N(m,\xir,2^{m-r})$ and \\ $\sum\limits_{i_{r-1} \leq r+1} N(m,\xir,2^{m-r})$. From \emph{Step 2}, the first term converges to zero, and the second term is negligible compared to the third term. Thus the minimum-weight codewords are mainly induced by $N(m,\xir,2^{m-r})$ with $i_{r-1} \leq r+1$, so 
\begin{align}
&E(N(2^{m-r},T)) \approx  \sum_{i_{r-1} \leq r+1} N(m,\xir,2^{m-r}) \notag \\
 \leq &|\{\xir, i_{r-1} \leq r+1\}|2^{3r} = \begin{pmatrix} r+2 \\ 2 \end{pmatrix}2^{3r}. 
\end{align}
\end{proof}

\begin{re}
\label{rer1}
Since we can not efficiently calculate the weight spectrum of specific pre-transformed codes, we analyze the average weight spectrum of the code ensemble defined by the random pre-transformation matrix. 

The results on the average weight spectrum are significant in two aspects. On the one hand, there exist good codes with minimum-weight codewords no larger than the average. 
%thus the improvement on the average weight spectrum means that we can construct a specific pre-transformation matrix to reduce minimum-weight codewords significantly. 
On the other hand, numerical results confirm that, the actual number of minimum-weight codewords is usually very close to the average, i.e., has small variance. In practice, this means that most random pre-transformation matrices are good.
\end{re}

\begin{re}
\label{re2}
The $\begin{pmatrix} r+2 \\ 2 \end{pmatrix}$  monomials $\xir$ with $i_{r-1} \leq r+1$ induce the majority of miminum-weight codewords of pre-transformed RM codes, which is a tiny part of  $\begin{pmatrix} m \\ r \end{pmatrix}$ monomials with degree $r$. It implies that in pre-transformed polar codes, the minimum-weight codewords are mainly induced by a small fraction of monomials. For example, in RM$(9, 2)$, the 498-$th$ row vector $x_4x_3x_2$ satisfies $i_{r-1} \leq r+1$, its corresponding binary representation is $(0,0,0,0,1,1,1,0,0)$. Monomials $\xir$ with $i_{r-1} \leq r+1$ share similar characteristics: they are at the bottom of $F_N$ and have high reliability among monomials with degree $r$.
\end{re}

In Fig. \ref{fig2}, we display the number of minimum-weight codewords in RM codes and pre-transformed RM codes on the logarithm domain. The example has code rate $R=0.5$, and the average number of minimum-weight codewords is approximately $2^{3r}$ in the order sense. In contrast, the number before pre-transformation is $2^{\Theta (m^2)}$. In other words, the logarithm scaling of minimum-weight codewords drops from quadratic growth to linear growth after pre-transformation. The result proves that pre-transformation can reduce minimum-weight codewords significantly, even in the order sense. This also partly explains the gain of PAC codes (a special case of pre-transformed RM codes) over RM codes.

\begin{figure}[htbp]
\centerline{\includegraphics[width = .32\textwidth]{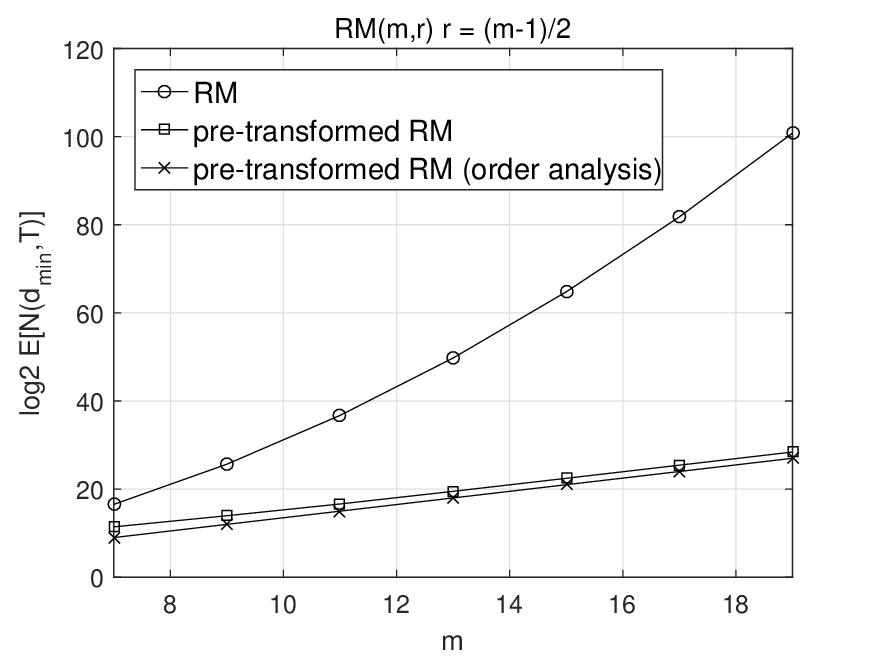}}
\caption{Logarithm scaling of the number of minimum-weight codewords in RM codes and pre-transformed RM codes.}
\label{fig2}
\end{figure}

\subsection{Low-weight codewords of pre-transformed RM codes}
In this section, we analyze low-weight codewords with Hamming weight within a constant multiple of minimum distance. We provide asymptotic analysis on the number of codewords in per-transformed RM$(m,r)$ with Hamming weight no larger than $2^{m-r+k}$, where $k$ is a positive integer. 

According to (\ref{eq1}), we only need to consider row vectors with weight no larger than $2^{m-r+k}$, or equivalently, monomials with degree at least $r-k$. For monomials with degree $r-q$, where $0 \leq q < k$, their corresponding row vectors have weight $2^{m-r+q} < 2^{m-r+k}$, thus they induce codewords with weight from $2^{m-r+q}$ to $2^{m-r+k}$.  Let 
\begin{align}
\label{eql0}
& A(m,\xs,2^{m-r-k}) \notag \\
=&\sum_{d'=2^{m-r+q}}^{2^{m-r+k}} N(m,\xs,d'),
\end{align}
$A(m,\xs,2^{m-r-k})$ is the number of codewords induced by $\xs$ with weight no larger than $2^{m-r+k}$. For monomials with degree $r-k$, their corresponding row vectors have weight exactly $2^{m-r+k}$, thus we only need to consider the number of weight-$2^{m-r+k}$ codewords induced by $\xk$. Therefore, we have
\begin{align}
\label{eql1}
&E(A(2^{m-r+k},T)) \notag \\
 = &\sum_{q=0}^{k-1}\sum_{0 \leq i_0<...<i_{r-q-1}\leq m-1}A(m,\xs,2^{m-r+k}) \notag \\
 &+ \sum_{0 \leq i_0<...<i_{r-k-1}\leq m-1}N(m,\xk,2^{m-r+k}).
\end{align}
Next, we analyze $N(m,\xk,2^{m-r+k})$ and $A(m,\xs,2^{m-r+k})$, where $0 \leq q < k$, and then provide asymptotic analysis on $E(A(2^{m-r+k},T))$.

\begin{theo}
\label{th3}
Let $k$ be a positive integer, 
\begin{align}
\label{eql2}
\log N(m,\xk,2^{m-r+k}) \leq  \notag \\
\begin{cases}
0 & i_{r-k-1} \geq r+3, and \\ 
& r \ \text{sufficiently large}; \\
(2^{k+2}-2)r+O(1) & i_{r-k-1} = r+2; \\
(2^{k+2}-1)r & i_{r-k-1} \leq r+1.
\end{cases}
\end{align}
Assume $ m-r \geq 2$, $\frac{r}{m}>\gamma$, where $\gamma > 0$ is a constant. \\Let $0\leq q < k$,
\begin{align}
\label{eql3}
\log A(m,\xs,2^{m-r+k}) \leq  \notag \\
\begin{cases}
0 & i_{r-q-1} \geq r+3, and \\ 
& r \ \text{sufficiently large}; \\
(2^{k+2}-1)r+\log r + O(1) & i_{r-q-1} \leq r+2. 
\end{cases}
\end{align}
Therefore, we have
\begin{align}
\label{eq11r}
(2^{k+2}-1)(r-k) & \leq \log E\left(  A(2^{m-r+k},T) \right) \notag \\ 
&\leq (2^{k+2}-1)r+O(\log r).
\end{align}
\end{theo}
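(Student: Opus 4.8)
The plan is to bound each of the three pieces appearing in the decomposition \eqref{eql1} of $E(A(2^{m-r+k},T))$, using the already-established structural fact that the weight of a codeword induced by a row vector equals $P(m,\cdot,d)$ times the multiplicity $2^{K-j}$, and then assemble the two-sided bound \eqref{eq11r}. First I would treat $\log N(m,\xk,2^{m-r+k})$, i.e. the contribution of degree-$(r-k)$ monomials whose row vectors already have weight exactly $2^{m-r+k}$. This is structurally parallel to Theorem \ref{th1}: by Lemma \ref{le1} (applied with the substitution $r \to r-k$ and target weight $2^{(m-k)-(r-k)}$, after accounting for the extra factor $2^k$ relating $2^{m-r+k}$ to the minimum distance of an RM code of the relevant parameters), we get $\log N(m,\xk,2^{m-r+k}) = \sum_{s=0}^{r-k-1}\MN(i_s,s) + (\text{term from the }2^k\text{ weight slack})$. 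The same sign analysis of $\MN(i_s,s)$ from \emph{Step 1} of Theorem \ref{th1} kills all terms with $i_s - s \geq 3$, so the sum is dominated by the few indices near the bottom of $F_N$, giving the three-case bound in \eqref{eql2} with the constant $2^{k+2}-1$ replacing the $3$ of \eqref{eq7}. The exponent $2^{k+2}-1$ should emerge from summing a geometric-type series in the $2^{i_s-s}$ terms once $k$ levels of weight slack are allowed.

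Next I would handle $A(m,\xs,2^{m-r+k})$ for $0 \le q < k$, defined in \eqref{eql0} as $\sum_{d'=2^{m-r+q}}^{2^{m-r+k}} N(m,\xs,d')$. The point here is that a degree-$(r-q)$ row vector has weight $2^{m-r+q}$, so it can induce codewords of any weight up to $2^{m-r+k}$; I would bound the whole sum by $2^{m-r+k-q}+1 \le 2^{k}$ times the maximum term, or more carefully observe that $\sum_{d'} N(m,\xs,d') = 2^{K-j}\sum_{d'} P(m,\xs,d') \le 2^{K-j}$ since the $P$'s over all $d'$ sum to at most one — this is the clean way to get \eqref{eql3}. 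Combined with the count of how many monomials of degree $r-q$ have $i_{r-q-1} \le r+2$, namely $O(r)$ of them (a polynomial bound, contributing the $\log r$ term), and with the bound on $2^{K-j}$ coming from the partial-order/reliability count as in Lemma \ref{le1}, this yields $\log A(m,\xs,2^{m-r+k}) \le (2^{k+2}-1)r + \log r + O(1)$ for the surviving monomials, and $0$ (for large $r$) when $i_{r-q-1}\ge r+3$, exactly because then every $\MN(i_s,s)$ is nonpositive and at least one is strictly negative.

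For the final two-sided estimate \eqref{eq11r}: the lower bound $(2^{k+2}-1)(r-k)$ comes from exhibiting a single well-chosen monomial — the most reliable degree-$(r-k)$ monomial with all indices small, i.e. $x_{r-k-1}\cdots x_0$ or a bottom monomial with $i_{r-k-1}\le r+1$ — and lower-bounding $N(m,\xk,2^{m-r+k})$ for it via the exact formula \eqref{eq4}; the $(r-k)$ rather than $r$ reflects that the relevant degree is $r-k$. The upper bound follows by summing: there are $\binom{m}{r-k} \le 2^{O(m)}$-but-we-only-need-$\mathrm{poly}(r)$-many surviving monomials in each class (since only $i_{r-k-1}\le r+2$ survive, a polynomial count), each contributing at most $2^{(2^{k+2}-1)r}$, and there are only $k = O(1)$ values of $q$; multiplying a polynomial-in-$r$ count by $2^{(2^{k+2}-1)r}$ gives $(2^{k+2}-1)r + O(\log r)$ on the log scale. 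The main obstacle I anticipate is \emph{Step 1}-analogue: carefully re-running the combinatorial-number estimates of $\MN(i_s,s)$ with $k$ units of weight slack to confirm that (a) the threshold $i_s - s \ge 3$ for negativity is unchanged, and (b) the surviving-index contributions sum precisely to the exponent $2^{k+2}-1$ and not merely $O(2^k)$ — pinning down that exact constant, and verifying the "$r$ sufficiently large" qualifier is uniform in the monomial, is where the real work lies. Everything else is bookkeeping: reindexing Lemma \ref{le1}, counting monomials with bounded top index, and collapsing $O(1)$-many $q$-terms.
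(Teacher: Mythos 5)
Your treatment of the degree-$(r-q)$ monomials ($0\leq q<k$) — the part of the theorem where the real difficulty lies — has a fatal gap. You propose to bound $A(m,\xs,2^{m-r+k})=\sum_{d'}N(m,\xs,d')=2^{K-j}\sum_{d'}P(m,\xs,d')\leq 2^{K-j}$ using $\sum_{d'}P\leq 1$, and then control $2^{K-j}$ "as in Lemma~\ref{le1}." But $K-j$ is the number of information bits of RM$(m,r)$ above $I_{\xs}$, namely $\sum_{s=0}^{r-q-1}\sum_{t=0}^{s+q+1}\zuhe$, which is \emph{exponential} in $r$ (already $2^{r-q}-1$ for the bottom monomial $x_{r-q-1}\cdots x_0$, and of order $2^{r}$ when $i_{r-q-1}=r+2$). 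So this "clean way" only yields $\log A\leq 2^{\Theta(r)}$, doubly-exponentially weaker than the claimed $(2^{k+2}-1)r+\log r+O(1)$. The entire content of (\ref{eql3}) is the cancellation between this huge multiplicity and the exponentially small probabilities $P(m,\xs,d')$ of weights $d'\leq 2^{m-r+k}$; discarding the probabilities cannot work. Moreover, Lemma~\ref{le1} gives $P(m,\cdot,d)$ only when $d$ equals the row weight, so for $d'>2^{m-r+q}$ you have no formula at all, and your alternative "number of terms times the maximum term" has no mechanism to bound that maximum. The paper's proof spends almost all of its effort exactly here: it uses the recursion of \cite[Theorem 2]{s2} (relations (\ref{eqd5})--(\ref{eqd6})), reduces to weights $2^{\ell}+2v$ via (\ref{eq3r22}), and proves by a triple induction on $\ell$, $v$ and the degree $n=r-q$ the bound $N(r-q+\ell,\xs,2^{\ell}+2v)\leq C_v(r-q+\ell)2^{(2^{\ell}+2v)(r-q+\ell)}$ with $C_0=1$, $C_v=2^{v-1}$, then sums over $v$ and handles $\ell=q+2,q+3$ separately. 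Nothing in your proposal substitutes for this induction.

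Two secondary points. First, for (\ref{eql2}) the correct generalization is $\MN_k(i_s,s)=2^{i_s-s}-\sum_{t=0}^{i_s-s-k-2}\zuhe$, whose negativity threshold is $i_s-s\geq k+3$, not the unchanged threshold $i_s-s\geq 3$ you anticipate; the constant $2^{k+2}-1$ arises simply because $i_{r-k-1}\leq r+1$ forces $i_s-s\leq k+2$ and hence $\MN_k(i_s,s)\leq 2^{k+2}-1$ termwise, not from a geometric series. Second, for the lower bound in (\ref{eq11r}) the monomial $x_{r-k-1}\cdots x_0$ you name first gives only $\log N=r-k$ (each $\MN_k(s,s)=1$); you must take the monomial maximizing $i_s-s$ subject to $i_{r-k-1}\leq r+1$, i.e.\ $x_{r+1}\cdots x_{k+2}$, which yields $(2^{k+2}-1)(r-k)$ as in the paper. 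These could be repaired, but the missing inductive bound on $N(m,\xs,d')$ for weights above the row weight is a genuine gap that your approach does not close.
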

\begin{proof}
The proof is in Appendix \ref{ap4}. The method is similar to that in Theorem \ref{th1}, but due to the sum terms in the recursive formula \cite[Theorem 2]{s2}, the analysis is more complicated. In particular, we derive the upper bound on $A(m,\xs,2^{m-r+k})$ when $i_{r-q-1} \leq r+2$ through induction. 
\end{proof}

In Fig. \ref{fig3}, we display the number of codewords with Hamming weight $2d_{min}$ in RM codes and pre-transformed RM codes on the logarithm domain. The example has code rate $R=0.5$, and the average number of codewords is approximately  $2^{(2^{k+2}-1)r}$ in the order sense. Similarly, the logarithm scaling of the weight-$2d_{min}$ codewords grows linearly with $m$ under pre-transformation, as opposed to quadratically without pre-transformation.
Our approximation is accurate asymptotically, and there is a gap between the true number and approximation when $m$ is small. Note that calculating the accurate number of weight-$2d_{min}$ codewords becomes intractable when $m$ is large.
\begin{figure}[htbp]
\centerline{\includegraphics[width = .32\textwidth]{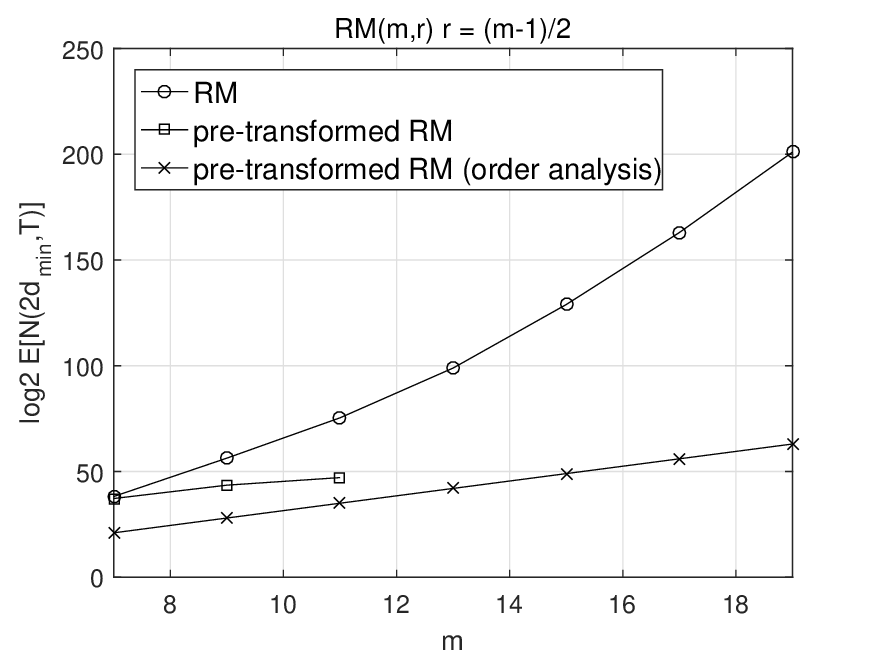}}
\caption{Logarithm scaling of the number of codewords with Hamming weight $2d_{min}$ in RM codes and pre-transformed RM codes.}
\label{fig3}
\end{figure}

\subsection{Minimum-weight codewords of pre-transformed polar codes}
In this section, we extend our analysis from RM codes to polar codes. We  prove that the average number of minimum-weight codewords of pre-transformed polar codes does not increase after pre-transformation.
Unlike RM codes, polar codes do not have a universal sub-channel selection rule. Therefore, their corresponding asymptotic results cannot be obtained as in RM codes. Fortunately, the conclusions in this section are non-asymptotic and apply to arbitrary code lengths.

Let $\CA$ be a decreasing polar code, define
\begin{align}
\label{eqp1}
\overline{r} = \min\{r \ | \ \CA \subseteq \text{RM}(m,r)\},
\end{align}
i.e., the largest degree of monomials in $\mathcal{I}$ is $\overline{r}$ and the minimum distance is $2^{m-\overline{r}}$. According to \cite[Proposition 7]{b21}, the number of minimum-weight codewords of $\CA$ is
\begin{align}
\label{eqp2}
N(2^{m-\overline{r}}) = \sum_{x_{i_{\overline{r}-1}} \cdots x_{i_{0}} \in \mathcal{I}}2^{\sum\limits_{s=0}^{\overline{r}-1}i_s-s+1}.
\end{align}
Similarly, we call $2^{\sum\limits_{s=0}^{\overline{r}-1}i_s-s+1}$ the number of minimum-weight codewords induced by $x_{i_{\overline{r}-1}} \cdots x_{i_{0}}$ in original polar codes. Let $i^*$ be the smallest index in information bits which can be represented by a monomial with degree $\overline{r}$, i.e.
\begin{align}
\label{eqp3}
i^*  = \min\{I_{x_{i_{\overline{r}-1}} \cdots x_{i_{0}}}|x_{i_{\overline{r}-1}} \cdots x_{i_{0}} \in \mathcal{I}\}. 
\end{align}
Next, we prove that pre-transformation does not increase the average number of minimum-weight codewords.
\begin{theo}
\label{th2}
If $\CA$ is a decreasing polar code, $\overline{r}$, $i^*$ are defined in (\ref{eqp1}), (\ref{eqp3}), we have
\begin{align}
\label{eqp4}
E(N(2^{m-\overline{r}},T)) \leq N(2^{m-\overline{r}}). 
\end{align}
Let the monomial representation of the $i^*$-$th$ row vector be \\$x_{i^*_{\overline{r}-1}}\cdots x_{i^*_0}$. If $\overline{r} \leq 1$, (\ref{eqp4}) must hold as an equality. \\
If $\overline{r} > 1$, (\ref{eqp4}) holds as an equality if and only if the following two conditions satisfy:
\begin{align*}
(*)&  \  \ i^*_{\overline{r}-1} \leq \overline{r} + 1. \\
(**)&  \  If \ I_f > i^* \ and \ deg(f) \leq \overline{r}, then \ f \in \mathcal{I}.
\end{align*}
\end{theo}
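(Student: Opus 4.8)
The plan is to prove the inequality \eqref{eqp4} by a term-by-term comparison, and then analyze the equality case. Recall from \eqref{eq1} that $E(N(2^{m-\overline{r}},T)) = \sum_{j:\,w(f_{I_j})\le 2^{m-\overline{r}}} 2^{K-j} P(m,I_j,2^{m-\overline{r}})$. Since the minimum distance is $2^{m-\overline{r}}$, only row vectors of weight exactly $2^{m-\overline{r}}$ contribute, i.e. only monomials of degree exactly $\overline{r}$ in $\mathcal{I}$. For such a monomial $f = x_{i_{\overline{r}-1}}\cdots x_{i_0}$ with $I_f = I_j$, Lemma \ref{le1} gives $\log P(m,f,2^{m-\overline{r}}) = \sum_{s=0}^{\overline{r}-1}(2^{i_s-s}-2^{i_s})$ and identifies $K-j$, the number of information indices exceeding $I_f$, as $\sum_{s=0}^{\overline{r}-1}\sum_{t=0}^{s+1}\binom{i_s}{t}$, so that $\log N(m,f,2^{m-\overline{r}})$ equals the quantity in \eqref{eq4}. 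On the other side, \eqref{eqp2} shows that the number of minimum-weight codewords induced by $f$ in the \emph{original} code is $2^{\sum_{s=0}^{\overline{r}-1}(i_s-s+1)}$. Thus it suffices to show, for every degree-$\overline{r}$ monomial $f \in \mathcal{I}$,
\begin{align}
\label{eqplan1}
\sum_{s=0}^{\overline{r}-1}\left(2^{i_s-s}-2^{i_s}+\sum_{t=0}^{s+1}\binom{i_s}{t}\right) \;\le\; \sum_{s=0}^{\overline{r}-1}(i_s-s+1),
\end{align}
and then sum over all such $f$.

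The key step is to reduce \eqref{eqplan1} to a per-index inequality $\mathcal{N}(i_s,s) \le i_s - s + 1$ where $\mathcal{N}(i_s,s) = 2^{i_s-s} - \sum_{t=0}^{i_s-s-2}\binom{i_s}{t}$ is the summand from \eqref{eq8}. Here I must use the structural constraint that $\CA$ is a \emph{decreasing} monomial code: since $f\in\mathcal{I}$, the partial order forces all monomials $\preceq f$ into $\mathcal{I}$, and in particular $i_s \ge s$ for all $s$ (the smallest degree-$\overline{r}$ monomial is $x_{\overline{r}-1}\cdots x_0$). I expect to split into cases: when $i_s - s$ is small (say $i_s - s \le 2$), verify directly that $\mathcal{N}(i_s,s) \le i_s - s + 1$ using $\sum_{t=0}^{i_s-s-2}\binom{i_s}{t}\ge 0$; when $i_s - s \ge 3$, invoke the estimate from \emph{Step 1} of Theorem \ref{th1}'s proof that $\mathcal{N}(i_s,s)\le 0$ for $s\ge 1$, which already beats $i_s-s+1 \ge 1 > 0$; the remaining boundary case $s=0$ needs $\mathcal{N}(i_0,0) = 2^{i_0} - \sum_{t=0}^{i_0-2}\binom{i_0}{t} = 2^{i_0}-(2^{i_0}-1-i_0) = i_0+1$, which meets the bound $i_0-0+1 = i_0+1$ with equality. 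So \eqref{eqplan1} holds, establishing \eqref{eqp4}.

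For the equality characterization, note \eqref{eqplan1} is an equality for a given $f$ iff $\mathcal{N}(i_s,s) = i_s-s+1$ for each $s$. The $s=0$ term is always an equality; for $s\ge 1$ I claim equality forces $i_s = s+1$, i.e. $i_s - s = 1$: indeed $\mathcal{N}(s,s) = 1 = 0\cdot\text{?}$... one checks $i_s - s = 0 \Rightarrow \mathcal{N} = 1 = i_s-s+1$? No — $\mathcal{N}(s,s)=2^0 - (\text{empty sum})=1$ while $i_s-s+1=1$, equality; $i_s-s=1 \Rightarrow \mathcal{N}=2^1-\binom{i_s}{0}=1$ while $i_s-s+1=2$, strict; $i_s-s\ge 2 \Rightarrow \mathcal{N}$ drops further below. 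Hence for $s\ge 1$, equality holds iff $i_s = s$, which combined with $i_0$ arbitrary and $i_0 < i_1 < \cdots$ forces $i_s \le s+1$ for all $s$, and in particular the top index satisfies $i_{\overline{r}-1}\le \overline{r}$... I will need to recheck these boundary computations carefully, but the upshot is that the per-monomial equality condition becomes exactly $i_{\overline{r}-1}\le \overline{r}+1$-type constraints, matching $(*)$ when specialized to $f$ being the $i^*$-th row vector. Finally, for the \emph{total} sum to be an equality, every degree-$\overline{r}$ monomial contributing to $E(N(\cdot,T))$ must satisfy this and no monomial can be "lost": condition $(**)$ ensures that the information indices above $i^*$ of degree $\le\overline{r}$ are exactly those counted in $K-j$, so no discrepancy arises from the combinatorial $K-j$ term versus the original count. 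The main obstacle I anticipate is precisely this bookkeeping in the equality case — reconciling which monomials' contributions must individually be tight with the global decreasing-code structure, and verifying that conditions $(*)$ and $(**)$ are jointly necessary and sufficient rather than merely sufficient; the inequality \eqref{eqp4} itself should follow routinely once the per-index bound is in hand.
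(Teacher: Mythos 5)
Your overall route is the same as the paper's: compare, monomial by monomial, the pre-transformed count given by Lemma~\ref{le1} with the original count $2^{\sum_{s}(i_s-s+1)}$ from (\ref{eqp2}), and reduce everything to the per-index bound $\mathcal{N}(i_s,s)\le i_s-s+1$. For the inequality (\ref{eqp4}) itself your argument is essentially sound, with two corrections needed. First, for a general decreasing code $K-j$ is \emph{not} equal to $\sum_{s}\sum_{t=0}^{s+1}\binom{i_s}{t}$: Lemma~\ref{le1} counts information indices above $I_f$ in RM$(m,\overline{r})$, and since $\mathcal{C}(\mathcal{I})\subseteq\mathrm{RM}(m,\overline{r})$ this is only an upper bound on the true $K-j$ (this is precisely the first inequality in the paper's chain (\ref{eqp5}), and it is where $(**)$ originates); the error is in the harmless direction for (\ref{eqp4}), but you cannot assert it as an equality. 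Second, the claim ``$\mathcal{N}(i_s,s)\le 0$ for $s\ge1$, $i_s-s\ge3$'' is false as stated (e.g.\ $\mathcal{N}(4,1)=3$); what is actually established is (\ref{eq13}), $\mathcal{N}(i_s,s)\le3$, which still beats $i_s-s+1\ge4$ in this range, so the bound survives.

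The genuine gap is in the equality (``if and only if'') part. Your per-index equality analysis uses the wrong upper limit in $\mathcal{N}(i_s,s)=2^{i_s-s}-\sum_{t=0}^{i_s-s-2}\binom{i_s}{t}$ (see (\ref{eq8})): for $i_s-s=1$ the sum is empty, so $\mathcal{N}=2=i_s-s+1$ (equality, not strict), and for $i_s-s=2$ one gets $\mathcal{N}=4-1=3=i_s-s+1$ (again equality). Hence the correct per-index condition is equality iff $s=0$ or $i_s-s\le2$, which at the monomial level is exactly $i_{\overline{r}-1}\le\overline{r}+1$, i.e.\ condition $(*)$; your computation instead forces $i_s=s$ for $s\ge1$ and would yield the threshold $i_{\overline{r}-1}\le\overline{r}$, contradicting the theorem if followed literally. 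In addition, the necessity and sufficiency of $(**)$, and the reduction from ``for every degree-$\overline{r}$ monomial in $\mathcal{I}$'' to the single index $i^*$, are exactly the bookkeeping you defer: since every summand is bounded termwise, global equality forces termwise equality, and the monomial at $i^*$ carries both the largest top variable and the weakest index constraint, so the all-monomial conditions collapse to $(*)$ and $(**)$ at $i^*$; in the paper this falls out of the two-inequality chain (\ref{eqp5}). You also do not treat the case $\overline{r}\le1$, where both inequalities are automatically equalities. So the inequality part stands after the fixes above, but the equality characterization is not yet proved in your proposal.
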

\begin{proof}
The proof is in Appendix \ref{ap5}. In fact, the number of minimum-weight codewords induced by every $x_{i_{\overline{r}-1}} \cdots x_{i_{0}} \in \mathcal{I}$ decreases after pre-transformation.
\end{proof}
\begin{re}
In fact, the number of minimum-weight codewords induced by every $x_{i_{\overline{r}-1}} \cdots x_{i_{0}} \in \mathcal{I}$ decreases after pre-transformation, but the amount of reduction differs. According to (\ref{eqp2}), the minimum-weight codewords in original polar codes are mainly induced by $x_{i_{\overline{r}-1}} \cdots x_{i_{0}}$ with large $\sum_{s=0}^{\overline{r}-1}(i_s-s)$. Since $i_s-s \leq i_{\overline{r}-1}-(\overline{r}-1)$, $0 \leq s \leq \overline{r}-1$, these monomials also have large $i_{\overline{r}-1}$. As explained in Theorem \ref{th1}, these codewords are reduced due to pre-transformation, which explains why pre-transformation can improve the weight spectrum.  Therefore, more $x_{i_{\overline{r}-1}} \cdots x_{i_{0}}$ with large $i_{\overline{r}-1}$ as information bits results in more significant improvement in the weight spectrum. The results prove that pre-transformation can be beneficial for polar codes too.
\end{re}

\section{conclusion}
In this paper, we provide asymptotic analysis on the number of low-weight codewords of the original and pre-transformed RM codes respectively, and prove that pre-transformation can reduce the low-weight codewords significantly. For decreasing polar codes, we prove that pre-transformation does not increase the average number of minimum-weight codewords. The numerical results validate the theoretical analysis and confirm the benefit of pre-transformation.
\enlargethispage{-1.4cm} 
%%%%%%

%%%%%%
%% To balance the columns at the last page of the paper use this
%% command somewhere at the top of the first column of the last page:
%%
% \enlargethispage{-5cm} 
%%
%% where the exact amount of page reduction has to be adapted to the
%% actual situation.
%%
%% If the balancing should occur in the middle of the references, use
%% the following trigger:
%%
% \IEEEtriggeratref{3}
%%
%% which triggers a \newpage (i.e., new column) just before the given
%% reference number. Note that you need to adapt this if you modify
%% the paper. The "triggered" command can be changed if desired:
%%
% \IEEEtriggercmd{\enlargethispage{-20cm}}
%%
%%%%%%

%%%%%%
%% References:
%% We recommend the usage of BibTeX:
%%
%\bibliographystyle{IEEEtran}
%\bibliography{definitions,bibliofile}
%%
%% where we here have assume the existence of the files
%% definitions.bib and bibliofile.bib.
%% BibTeX documentation can be obtained at:
%% http://www.ctan.org/tex-archive/biblio/bibtex/contrib/doc/
%%%%%%
%% Or you use manual references (pay attention to consistency and the
%% formatting style!):
\newpage

%%%%%% 
%% Appendix:
%% If needed a single appendix is created by
%%
\newpage
\appendices

\section{}
\label{ap1}
Codewords in polar codes can be regarded as polynomials. Let $f,g$ be two codewords in RM$(m,r)$, define $d(f,g)$ be the Hamming distance between $f$ and $g$.
Define $\operatorname{poly}(d) = \{f \in \text{RM}(m, r) : \omega(f) \leq d\}$, i.e., polynomials in RM$(m, r)$ with weight no larger than $d$. Define the degree of a polynomial $f$ to be the maximal degree of monomials in $f$, denoted  
by $deg(f)$. Let $f$ be a polynomial and $y \in \mathbf F_2^m$. Define the derivative of $f$ in direction $y$ by
\begin{align}
\Delta_y f(x)=f(x+y)+f(x),
\end{align}
we have $deg(\Delta_y f) \leq deg(f)-1$.
For example, let $m=2$, $f(x) = x_0  + x_0x_1$, $y = ( 1, 0)$. Then 
\begin{align}
&\Delta_y f(x)=f(x+y)+f(x) \notag \\ 
=& (x_0+1) + (x_0+1)x_1 + x_0  + x_0x_1 = x_1 + 1. 
\end{align}
Here, $deg(\Delta_y f) = 1 = deg(f)-1$. Define the $k$-iterated derivative of $f$ in direction $Y=\left(y_1, \ldots, y_k\right)\in (\mathbf F_2^{m})^k$ by
\begin{align}
\Delta_Y f(x)=\Delta_{y_1} \Delta_{y_2} \cdots \Delta_{y_k} f(x) .
\end{align} 
Since $deg(\Delta_y f)\leq deg(f)-1$, $deg(\Delta_Y f)\leq deg(f)-k$.

Let $\mathcal{S} \subseteq \operatorname{RM}(m, r)$ be a subset of polynomials, we call a subset of polynomials $\mathcal{B}$ is a $\delta$-net for $\mathcal{S}$ if $\forall f \in \mathcal{S}$, there exists $g \in \mathcal{B}$ such that $d(f, g) \leq \delta$.

\begin{Le}\cite[Corollary 3.1]{y1} \label{lemma1}
Let $t$ be an integer, define
\begin{align*}
\begin{array}{l}
\mathcal{A}_{r-k-1, t}= \{\operatorname{Maj} \left(\Delta_{Y_1} f, \ldots, \Delta_{Y_{t}} f\right):  \\ 
\quad \quad \quad \quad \quad \ \  Y_1, \ldots, Y_{t} \in \left(\mathbf F_2^{m}\right)^{r-k-1}, f \in \operatorname{RM}(m, r)\},
\end{array}
\end{align*}
where $\operatorname{Maj}$ is the majority function defined in \cite{y1}. Then $\mathcal{A}_{r-k-1, t}$ is a $\delta$-net for $\operatorname{poly}(2^{m-r+k})$, where $t=\lceil 17 \log (2^m / \delta)\rceil$.
\end{Le}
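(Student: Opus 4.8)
The plan is to prove the net property by the probabilistic method. Fix an arbitrary target word $h\in\operatorname{poly}(2^{m-r+k})$, and draw the directions $Y_1,\dots,Y_t\in(\mathbf F_2^m)^{r-k-1}$ independently at random; I will show that the resulting majority $g=\operatorname{Maj}(\Delta_{Y_1}h,\dots,\Delta_{Y_t}h)$ satisfies $E[d(g,h)]<\delta$. Since $g$ is precisely of the form defining $\mathcal{A}_{r-k-1,t}$ (take $f=h$), some realization then yields a net element within distance $\delta$, and as $h$ is arbitrary this gives the $\delta$-net. The starting point is the inclusion--exclusion identity for iterated derivatives, $\Delta_Y h(x)=\sum_{S\subseteq\{1,\dots,r-k-1\}}h(x+y_S)$ with $y_S=\sum_{i\in S}y_i$, which isolates the $S=\emptyset$ term $h(x)$ and routes the rest through the support $U=\{x:h(x)=1\}$, whose size is $w(h)\le 2^{m-r+k}$. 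Note each $\Delta_{Y_j}h$ has degree at most $k+1$ by the derivative-degree bound recorded in the excerpt.

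First I would estimate the single-derivative agreement probability at a fixed coordinate $x$. From the identity, $\Delta_Y h(x)=h(x)$ if and only if $N_x:=|\{S\neq\emptyset:x+y_S\in U\}|$ is even, and this characterization is the same whether or not $x\in U$, so the analysis is uniform over all coordinates. Because every nonempty subset-sum $y_S$ is uniform on $\mathbf F_2^m$, one computes $E[N_x]=(2^{r-k-1}-1)\,w(h)/2^m\le(2^{r-k-1}-1)2^{k-r}<\tfrac12$. The indicators $\mathbf 1[x+y_S\in U]$ are rare, each of probability $2^{k-r}$, and pairwise independent (distinct nonempty subsets give linearly independent selection vectors over $\mathbf F_2$), so $N_x$ behaves like a Poisson variable of mean below $\tfrac12$; I would show that its parity is biased, namely $\Pr[N_x\text{ odd}]\le\tfrac12-c$ for an absolute constant $c>0$, morally $\Pr[N_x\text{ even}]$ being at least about $\tfrac12(1+e^{-1})$. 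This is the main obstacle: the first-moment bound alone only gives $E[N_x]\le\tfrac12$, hence $\Pr[N_x\text{ odd}]\le\tfrac12$, which sits exactly at the majority threshold and is useless for amplification. Extracting a constant advantage requires controlling the full parity bias $E[(-1)^{N_x}]$, and the subtlety is that the $y_S$ are only pairwise independent, so the bias cannot be factored naively; I would resolve this either by exploiting the algebraic structure of the subset-sums to obtain effective higher-order independence, or by a truncated inclusion--exclusion / anti-concentration argument showing that a sum of rare pairwise-independent hits has Poisson-like parity bounded away from $\tfrac12$.

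Once the uniform margin $\Pr_Y[\Delta_Y h(x)\neq h(x)]\le\tfrac12-c$ is secured, the amplification is routine. Taking $Y_1,\dots,Y_t$ independent, the bits $\Delta_{Y_j}h(x)$ are i.i.d.\ and each agrees with $h(x)$ with probability at least $\tfrac12+c$, so the Chernoff bound gives $\Pr[g(x)\neq h(x)]\le\exp(-2tc^2)$ at each coordinate. Summing over the $2^m$ coordinates yields $E[d(g,h)]\le 2^m\exp(-2tc^2)$, and the choice $t=\lceil 17\log(2^m/\delta)\rceil$, with the constant calibrated to the margin $c$ obtained above, forces this quantity strictly below $\delta$. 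By the probabilistic method there then exist directions $Y_1,\dots,Y_t$ realizing $d(g,h)\le\delta$, and the corresponding $g$ lies in $\mathcal{A}_{r-k-1,t}$, completing the proof that $\mathcal{A}_{r-k-1,t}$ is a $\delta$-net for $\operatorname{poly}(2^{m-r+k})$.
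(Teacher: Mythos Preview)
The paper does not prove this lemma; it is quoted from \cite[Corollary~3.1]{y1} and used as a black box in Appendix~\ref{ap1}. So there is no in-paper proof to compare against.

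Your high-level strategy---take $f=h$, draw the $Y_j$ independently at random, bound $\Pr[g(x)\neq h(x)]$ coordinatewise via Chernoff, then sum over coordinates---is indeed the approach of \cite{y1}. The gap is exactly where you flag it. Your first-moment computation actually gives $E[N_x]\le \tfrac12-2^{-(r-k)}$, hence via Markov $\Pr[N_x\ \text{odd}]\le\tfrac12-2^{-(r-k)}$; this margin vanishes as $r-k\to\infty$ and cannot drive a Chernoff amplification with only $t=O(\log(2^m/\delta))$ samples (it would force $t=\Omega(2^{2(r-k)}\log(2^m/\delta))$ instead). Your heuristic target $\Pr[N_x\text{ even}]\approx\tfrac12(1+e^{-1})$ is the right one, but the proposed fixes are not carried out: the indicators $\mathbf 1[x+y_S\in U]$ are only pairwise independent (already $\{1\},\{2\},\{1,2\}$ witness a linear dependence among the $y_S$), and pairwise independence alone does not obviously yield the Poisson-like parity bound you invoke. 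In \cite{y1} the constant advantage is obtained by an argument that exploits the full structure of the random affine subspace $\{x+y_S:S\subseteq[r-k-1]\}$, not just low-order moment bounds; the constant $17$ in the statement reflects the specific bias extracted there. As written, your outline is structurally correct but its central quantitative inequality remains unproved.
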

\begin{proof}[Proof of Theorem \ref{thy}]
Denote $h = m-r+k$. To prove the lower bound, assume $g(x_0,...,x_{h-1})$ be an arbitrary polynomial with degree $k+1$. Define
\begin{align*}
f\left(x_0, \ldots, x_{m-1}\right)=\left(g\left(x_0, \ldots, x_{h-1}\right)+x_{h}\right)x_{h+1} \ldots x_{m-1}.
\end{align*}
It is clear that $f\in  \operatorname{RM}(m, r)$ and $w(f)=2^h$. The number of polynomials with $h$ variables and degree $k+1$ is
$2^{\binom{h}{k+1}} =2^{\Theta(m^{k+1})}$,
which implies the lower bound.

To prove the upper bound, let $\delta = 2^{m-r-2}$, define $\operatorname{adj}(f)=\{g \in \mathcal{A}_{r-k-1, t}: d(f,g) \leq \delta\}$, where $t=17(r+2)$. By Lemma \ref{lemma1}, $\mathcal{A}_{r-k-1, t}$ is a $\delta$-net for $\operatorname{poly}(2^h)$, thus $\forall f \in \operatorname{poly}(2^h)$, $\operatorname{adj}(f) \neq \emptyset$. Next, we prove for any two differernt $f_1, f_2 \in \operatorname{poly}(2^h)$, $\operatorname{adj}(f_1) \bigcap \operatorname{adj}(f_2)=\emptyset$, otherwise there exist $g \in \mathcal{A}_{r-k-1, t}$, such that $d(f_1, g) \leq \delta$ and  $d(f_2, g) \leq \delta$. By triangle inequality, $d(f_1,f_2)\leq 2^{m-r-1} < d_{min} = 2^{m-r}$, which is a contradiction. Notice that $deg(\Delta_{Y} f) \leq k+1$, $\forall \ Y \in \left(\mathbf F_2^{m}\right)^{r-k-1}$, $f \in \text{RM}(m,r)$, we have
\begin{align}
A(2^h) &\leq \sum\limits_{f \in \tiny{\operatorname{poly}(2^h)}}|\operatorname{adj}(f)| = |\bigcup_{f \in \operatorname{poly}(2^h)}\operatorname{adj}(f)| \notag \\
&\leq |\mathcal{A}_{r-k-1, t}| \leq 2^{t\sum\limits_{s=0}^{k+1}\begin{pmatrix} m \\ s \end{pmatrix}} = 2^{\Theta(m^{k+2})}.
\end{align}
\end{proof}

\section{}
\label{ap2}
\begin{proof}[Proof of Lemma \ref{le1}]
We prove Lemma \ref{le1} via induction on $m$. Firstly, if $m=1$, Lemma \ref{le1} can be proved directly. For the induction step $m-1 \rightarrow m$, we consider two cases according to $i_{r-1}$:

1) $i_{r-1}=m-1$, i.e., $\xir$ is in the top half of $F_N$. The number of information bits in the top half and  whose indices are greater than $I_{\xir}$ is equal to the number of information bits  whose indices are greater than $I_{x_{i_{r-2}}\cdots x_{i_0}}$ in RM$(m-1,r-1)$, which is $\sum\limits_{s=0}^{r-2}\sum\limits_{t=0}^{s+1}\zuhe$ by inductive hypothesis. The number of information bits in the lower half is $\sum\limits_{t=0}^{r}\begin{pmatrix} m-1 \\ t \end{pmatrix}=\sum\limits_{t=0}^{r}\begin{pmatrix} i_{r-1} \\ t \end{pmatrix}$, thus the total number of information bits  whose indices are greater than $I_{\xir}$ is equal to $\sum\limits_{s=0}^{r-1}\sum\limits_{t=0}^{s+1}\zuhe$.
According to \cite[Theorem 1]{s2}, 
\begin{align}
\label{eq5}
&\log P(m,\xir,2^{m-r}) \notag \\
=& \log P(m-1,x_{i_{r-2}}\cdots x_{i_0},2^{m-r}) + 2^{m-r}-2^{m-1}  \notag \\
=& \log P(m-1,x_{i_{r-2}}\cdots x_{i_0},2^{m-r}) + 2^{i_{r-1}-(r-1)}-2^{i_{r-1}}  \notag \\
=&  \sum_{s=0}^{r-1}\left(2^{i_s-s}-2^{i_s}\right).
\end{align}
The last equality is due to inductive hypothesis.

2) $i_{r-1}<m-1$, i.e., $\xir$ is in the lower half of $F_N$. The number of information bits  whose indices are greater than $I_{\xir}$ is equal to the number of information bits  whose indices are greater than $I_{\xir}$ in RM$(m-1,r)$, which is $\sum\limits_{s=0}^{r-1}\sum\limits_{t=0}^{s+1}\zuhe$ by inductive hypothesis.
According to \cite[Theorem 1]{s2},
\begin{align*}
&\log P(m,\xir,2^{m-r}) \notag \\
=& \log P(m-1,x_{i_{r-1}}\cdots x_{i_0},2^{m-1-r})    \notag \\
=&  \sum_{s=0}^{r-1}\left(2^{i_s-s}-2^{i_s}\right).
\end{align*}
The last equality is due to inductive hypothesis.
\end{proof}

\section{}
\label{ap3}

\begin{proof}[Proof of Theorem \ref{th1}] 
Firstly, we analyze $\MN(i_s,s)$ with respect to $s$, in fact, if $s \geq 1$, $i_s-s \geq 3$, $\MN(i_s,s) \leq 0$ for sufficiently large $i_s$.

\emph{case 1}: $s = 0$, $\MN(i_0,0) = 1 + i_0$.

\emph{case 2}: $s = 1$, $\MN(i_1,1) = -2^{i_1-1} + 1 + i_1 + {\begin{pmatrix} i_1 \\ 2 \end{pmatrix}}$, and we have $\MN(i_1,1) \leq 0$ if $i_1 \geq 5$. 

\emph{case 3}: $2 \leq s \leq \lceil \frac{i_s}{2} \rceil-2$,
\begin{align}
\label{eq9}
\MN(i_s,s) &\leq 2^{i_s-2}-2^{i_s}+ \sum_{t=0}^{\lceil \frac{i_s}{2} \rceil - 1}\zuhe \notag \\
&\leq -\frac{3}{4}2^{i_s}+ \frac{1}{2}2^{i_s} = -\frac{1}{4}2^{i_s} \leq 0.
\end{align}

\emph{case 4}: $\lceil \frac{i_s}{2} \rceil-2 \leq s \leq i_s - \log (i_s+16\sqrt{2i_s})$,

\begin{align}
\label{eq10}
 &\MN(i_s,s) = 2^{i_s-s} - \sum\limits_{t=0}^{i_s-s-2}\zuhe \leq 2^{i_s-s}- \begin{pmatrix} i_s \\ i_s-s-2 \end{pmatrix} \notag \\
 \overset{(a)}{\leq}& 2^{i_s-s} - \frac{2^{i_sh(\frac{i_s-s-2}{i_s})}}{\sqrt{2i_s}} \overset{(b)}{\leq} 2^{i_s-s} - \frac{2^{2(i_s-s-2)}}{\sqrt{2i_s}}\notag \\
 = &(1-\frac{2^{i_s-s}}{16\sqrt{2i_s}})2^{i_s-s}\notag \leq (1-\frac{i_s+16\sqrt{2i_s}}{16\sqrt{2i_s}})2^{i_s-s} \notag \\
 \leq & -\frac{i_s^{\frac{3}{2}}}{16\sqrt{2}}-i_s \leq -\frac{i_s^{\frac{3}{2}}}{16\sqrt{2}} \leq 0,
\end{align}
where ($a$) is from  \cite[problem 5.8]{r5}, ($b$) is due to $\frac{i_s-s-2}{i_s} \leq \frac{1}{2}$ when $s \geq \lceil \frac{i_s}{2} \rceil-2$ and $h(x) \geq 2x$, $0 < x \leq \frac{1}{2}$.

\emph{case 5}: $i_s - \log (i_s+16\sqrt{2i_s}) \leq s \leq i_s - 4$,
\begin{align}
\label{eq12}
\MN(i_s,s) &\leq i_s+16\sqrt{2i_s} - \sum\limits_{t=0}^{2}\zuhe \notag \\
& = 16\sqrt{2i_s} - 1 - \frac{i_s(i_s-1)}{2} \overset{(c)}{\leq} 0, 
\end{align}
$(c)$ holds when $i_s \geq 14$.

\emph{case 6}: $s = i_s  - 3$, $\MN(i_s,i_s-3) = 7 - i_s \overset{(d)}{\leq} 0$, $(d)$ holds when $i_s \geq 7$.

\emph{case 7}: $i_s  - 2 \leq s \leq i_s$, $\MN(i_s,s) \leq 3$.

We conclude that $\MN(i_s,s) \leq 0$ if $i_s \geq 14$, $i_s-s \geq 3$, $s \geq 1$ from the discussion above. When $i_s \leq 13$, $s \geq 1$, through compute search, we have $\MN(i_s,s) \leq 3$. Therefore,
\begin{align} 
\label{eq13}
\MN(i_s,s) \leq 3, \ \text{if} \ s \geq 1.
\end{align}
Next, we analyze $\log N(m,\xir,2^{m-r})=\sum\limits_{s=0}^{r-1} \MN(i_s,s)$ with respect tof $i_{r-1}$.

1) $i_{r-1} \leq r+1$, we have $i_s-s \leq i_{r-1}-(r-1) \leq 2$, by \emph{case 7}, 
\begin{align}
\label{eq14}
\log N(m,\xir,2^{m-r}) \leq 3r.
\end{align}

2) $i_{r-1} = r+2$, we have $i_0 \leq i_{r-1}-(r-1) \leq 3$, thus $\MN(i_0,0) = 1+i_0 \leq 4$, by (\ref{eq13}),
 \begin{align}
\label{eq15}
& \log N(m,\xir,2^{m-r})   \notag \\
=& \MN(r+2,r-1) + \sum_{s=1}^{r-2}\MN(i_s,s)+ \MN(i_0,0) \notag \\
\leq & 7-(r+2)+3(r-2)+4  \notag \\
= & 2r+3.
\end{align}

3) $i_{r-1} \geq r+3$, according to \emph{case 3-5}, when $r$ is sufficiently large, we have
\begin{align}
\label{eq16}
\MN(i_{r-1},r-1) \leq -\frac{i_{r-1}^{\frac{3}{2}}}{16\sqrt{2}} \leq -\frac{r^{\frac{3}{2}}}{16\sqrt{2}}. 
\end{align}
Since $i_0 \leq i_1-1$,
\begin{align}
\label{eq17}
&\MN(i_1,1)+\MN(i_0,0) = -2^{i_1-1} + 2 + i_1 + i_0 + {\begin{pmatrix} i_1 \\ 2 \end{pmatrix}}  \notag \\
  \leq &-2^{i_1-1} + 1 + 2i_1 + {\begin{pmatrix} i_1 \\ 2 \end{pmatrix}} \leq 7,
\end{align}
we prove the last inequality through computer search. Thus
\begin{align}
\label{eq18}
& \log N(m,\xir,2^{m-r})  \notag \\
\leq& -\frac{r^{\frac{3}{2}}}{16\sqrt{2}} + 3(r-3)+7  \overset{(e)}{\leq}  -\frac{r^{\frac{3}{2}}}{32} \leq 0,
\end{align}
$(e)$ holds when $r$ is sufficiently large. Thus (\ref{eq7}) is proved from (\ref{eq14}) (\ref{eq15}) (\ref{eq18}).

Now we are ready to prove (\ref{eq11}). On the one hand, $x_{r+1} \cdots x_{2} \in \text{RM}(m,r)$ when $m-r \geq 2$,  therefore
\begin{align}
\label{eq19}
 \log E(N(2^{m-r},T)) \geq \log N(m,x_{r+1} \cdots x_{2},2^{m-r}) =3r.
\end{align}
On the other hand,
\begin{align}
\label{eq20}
&\sum_{i_{r-1} \leq r+1} N(m,\xir,2^{m-r})  \notag \\
& \leq |\{\xir, i_{r-1} \leq r+1\}|2^{3r} = \begin{pmatrix} r+2 \\ 2 \end{pmatrix}2^{3r}.
\end{align}
\begin{align}
\label{eq21}
&\sum_{i_{r-1} = r+2} N(m,\xir,2^{m-r})  \\
& \leq |\{\xir, i_{r-1} = r+2\}|2^{2r+3} = \begin{pmatrix} r+2 \\ 3 \end{pmatrix}2^{2r+3} .\notag 
\end{align}
\begin{align}
\label{eq22}
&\sum_{i_{r-1} \geq r+3} N(m,\xir,2^{m-r})  \notag \\
& \overset{(f)}{\leq} |\{\{\xir, i_{r-1} \geq r+3\}|2^{-\frac{r^{\frac{3}{2}}}{32}} \leq 2^{m-\frac{r^{\frac{3}{2}}}{32}} \notag \\
& \leq 2^{\frac{r}{\gamma}-\frac{r^{\frac{3}{2}}}{32}} \overset{(g)}{<} 1,
\end{align}
where $(f)$ holds if (\ref{eq18}) holds, $(g)$ holds if $r \geq \left(\frac{32}{\gamma}\right)^2$. Divide the sum terms in (\ref{eq2}) into three parts according to $i_{r-1}$, by (\ref{eq20})-(\ref{eq22}),
\begin{align}
& E(N(2^{m-r},T)) \notag \\
\leq &\begin{pmatrix} r+2 \\ 2 \end{pmatrix}2^{3r} + \begin{pmatrix} r+2 \\ 3 \end{pmatrix}2^{2r+3} + 1 \notag \\
 = &\begin{pmatrix} r+2 \\ 2 \end{pmatrix}2^{3r}(1+o(1)),  \\
& \log E(N(2^{m-r},T)) \leq 3r + \log \begin{pmatrix} r+2 \\ 2 \end{pmatrix} + o(1).  \label{eq24}
\end{align}
Combine (\ref{eq19}) and (\ref{eq24}), we complete the proof of (\ref{eq11}).
\end{proof}

\section{}
\label{ap4}
\emph{Proof outline:}

Firstly, we prove (\ref{eql2}) by \emph{Step 1-2}, the proof of (\ref{eql2}) is similar to that of (\ref{eq7}), and is omitted due to space limitation.

\emph{Step 1}, let
\begin{align}
\label{eq31}
\mathcal{N}_k(i_s,s)  &= 2^{i_s-s}-2^{i_s}+\sum\limits_{t=0}^{s+k+1}\zuhe \notag \\
& = 2^{i_s-s}-\sum\limits_{t=0}^{i_s-s-k-2}\zuhe,
\end{align}
we prove $\mathcal{N}_k(i_s,s) \leq 0$ if $s \geq 1$, $i_s-s \geq k+3$.

\emph{Step 2}, similar to (\ref{eq4}), we have 
\begin{align}
\label{eq32}
\log N(m,\xk,2^{m-r+k})=\sum\limits_{s=0}^{r-k-1} \MN_k(i_s,s),
\end{align} 
and we analyze $\log N(m,\xk,2^{m-r+k})$ with respect to $i_{r-k-1}$. 

Next, we prove (\ref{eql3}) by \emph{Step 3-5}.

\emph{Step 3}, let
\begin{align}
\label{eqd1}
&\MN_{k,q}(i_s,s) \overset{def}{=}  \notag \\
&\max_{2^{i_s-s}\leq d' \leq 2^{i_s-s+k-q}}(2^{i_s}-d')(h(\frac{2^{i_s-s+k-q}-d'}{2(2^{i_s}-d')})-1) \notag \\
&+ \sum_{t=0}^{q+s+1} \zuhe +i_s-s+ k-q,   
\end{align}
where $s \geq k-q+1$, by (\ref{eqd5}) and (\ref{eqd6}), we prove
\begin{align}
\label{eqd2}
&\log A(m,\xs,2^{m-r+k}) \notag \\ 
\leq & m +  \sum_{s=0}^{k-q}\left(\sum_{t=0}^{q+s+1} \zuhe + i_s\right)
+ \sum_{s=k-q+1}^{r-q-1}\MN_{k,q}(i_s,s)    
    \end{align}
via induction on $m$.

\emph{Step 4}, we prove $\mathcal{N}_{k,s}(i_s,s) \leq 0$ when $i_s-s$ is large. Therefore, if $i_{r-q-1} \geq r+3$, we have
$\log A(m,\xs,2^{m-r+k}) \leq 0$.

The proof of \emph{Step 3-4} is omitted due to space limitation.

\emph{Step 5}, when $i_{r-q-1} \leq r+2$, we prove 
\begin{align*}
&\log A(m,\xs,2^{m-r+k}) \\
\leq &(2^{k+2}-1)r+ \log r + O(1)
\end{align*}
via induction.

Finally, we prove (\ref{eq11r}) by \emph{Step 6}.

\emph{Step 6}, based on \emph{Step 4-5}, we have
\begin{align}
\label{eqd3}
& E(A(2^{m-r+k},T)) \notag \\
 \approx &\sum_{q=0}^{k-1} \sum_{i_{r-q-1} \leq r+2} A(m,\xs,2^{m-r+k})\notag \\
& + \sum_{i_{r-k-1} \leq r+2} N(m,\xk,2^{m-r+k}).  
\end{align}
Combine (\ref{eql2}) and (\ref{eql3}), we complete the proof of (\ref{eq11r}).

\begin{proof}[Proof of Theorem \ref{th3}]
$\forall \ 0 \leq q < k$, \\ $N(m,\xs,d)$ is the number of weight-$d$ codewords induced by $\xs$ in RM$(m,r)$, we have 
\begin{align}
\label{eqd4}
&\log N(m,\xs,d) \notag \\ 
= &\log P(m,\xs,d) + \sum_{s=0}^{r-q-1}\sum_{t=0}^{s+q+1}\zuhe.
\end{align}
According to \cite[Theorem 2]{s2}, if $i_{r-q-1}=m-1$, 
\begin{align}
\label{eqd5}
&N(m,\xs,d)  \notag \\
%&\sum\limits_{\mbox{\tiny$\begin{array}{c}
%d'= 2^{m-r+q}\\
%d-d' is \ even \end{array}$}}^{d}N(m-1,x_{i_{r-q-2}}\cdots x_{i_0},d')\frac{2^{d'}\begin{pmatrix} 2^{m-1}-d' \\ \frac{d-d'}{2} \end{pmatrix}}{2^{\sum\limits_{t=0}^{m-r-2} \begin{pmatrix} m-1 \\ t \end{pmatrix}}}.
=&\sum\limits_{\mbox{\tiny$\begin{array}{c}
d'= 2^{m-r+q}\\
d-d' is \ even \end{array}$}}^{d}\Bigg(N(m-1,x_{i_{r-q-2}}\cdots x_{i_0},d')* \notag \\
&2^{\Bigg(d'-\sum\limits_{t=0}^{m-r-2} \begin{pmatrix} m-1 \\ t \end{pmatrix}\Bigg)}*\begin{pmatrix} 2^{m-1}-d' \\ \frac{d-d'}{2} \end{pmatrix}\Bigg).
\end{align}
If $i_{r-q-1} < m-1$, 
\begin{align}
\label{eqd6}
&N(m,\xs,d) = N(m-1,\xs,\frac{d}{2}).
\end{align}
We are going to prove $\log A(m,\xs,2^{m-r+k}) \leq (2^{k+2}-1)r+\log r + O(1)$ when $i_{r-q-1} \leq r+2$.

Let $i_{r-q-1} = r-q-1+\ell$, where $0 \leq \ell \leq q+3$, apply (\ref{eqd6}) $m-r+q-\ell$ times repeatedly, we have 
\begin{align}
\label{eq3r21}
&N(m,\xs,d) \notag \\
= &N(r-q+\ell,\xs,\frac{d}{2^{m-r+q-\ell}}),
\end{align}
where $2^{m-r+q} \leq d \leq 2^{m-r+k}$. (\ref{eq3r21}) must be zero unless $\frac{d}{2^{m-r+q-\ell}}=2^\ell+2v$, where $v$ is a non-negtive integer, therefore \\
$N(m,\xs,d) =$
\begin{align}
\label{eq3r22}
\begin{cases}
N(r-q+\ell,\xs,2^{\ell}+2v)&   \\ 
\qquad \qquad  \quad \ \     d = (2^\ell+2v)2^{m-r+q-\ell}, v \geq 0; \\
0  \qquad \qquad \qquad \qquad \qquad \qquad  \quad \ otherwise. 
\end{cases}
\end{align}
Next, let $C_0 = 1, C_v = 2^{v-1}, v \geq 1$, we are going to prove that if $i_{r-q-1} = r-q-1+\ell$, $0 \leq \ell \leq q+1$,
\begin{align}
\label{eq3r23}
&N(r-q+\ell,\xs,2^{\ell}+2v) \notag \\ 
\leq &C_v(r-q+\ell)2^{(2^\ell+2v)(r-q+\ell)}
\end{align}
via induction on $\ell$, $v$ and $r-q$, the degree of the monomial $\xs$. 

When $\ell=0$, we prove (\ref{eq3r23}) holds via induction on $v$ and $r-q$.
When $\ell \geq 1$, in addition to induction on $v$ and $r-q$, we also use the inductive hypothesis that (\ref{eq3r23}) holds from $0$ to $\ell-1$, $\forall \ v \geq 0, r-q \geq 0$.

If $v = 0$, by (\ref{eq32}),
\begin{align}
\label{eq3r24}
&\log N(r-q+\ell,\xs,2^{\ell}) = \sum\limits_{s=0}^{r-q-1} \MN_q(i_s,s) \notag \\   
 = &\sum\limits_{s=0}^{r-q-1}\left(2^{i_s-s}-\sum\limits_{t=0}^{i_s-s-q-2}\zuhe\right) \leq 2^\ell(r-q),
\end{align}
where the last inequality is due to $i_s-s \leq i_{r-q-1}-(r-q-1) = \ell$. Therefore,
\begin{align}
\label{eq3r24p} 
N(r-q+\ell,\xs,2^{\ell}) \leq C_0(r-q+\ell)2^{2^\ell(r-q+\ell)}.
\end{align}

For the induction step $v-1 \rightarrow v$, denote $r-q=n$ for convience, we complete the induction step via induction on $n$. 

When $n = 0$, $v \geq 1$, no codeword has Hamming weight $2^\ell+2v$ which is larger than the code length $2^\ell$, thus
\begin{align}
\label{eq3r25}
N(\ell,\mathbf{1},2^{\ell}+2v) = 0 \leq C_v\ell2^{(2^\ell+2v)\ell},
\end{align}
where $\mathbf{1}$ represents the monomial with degree $0$. 

For the induction step $n-1 \rightarrow n$, by (\ref{eqd5}),
\begin{align}
&N(n+\ell,x_{i_{n-1}}\cdots x_{i_0},2^\ell+2v) \notag \\
%&\sum_{\mu=0}^{v} N(n-1+\ell,x_{i_{n-2}}\cdots x_{i_0},2^\ell+2\mu)\frac{2^{2^\ell+2\mu}\begin{pmatrix} 2^{n-1+\ell} \\ v-\mu\end{pmatrix}}{2^{\sum\limits_{t=0}^{\ell-q-2}\begin{pmatrix} n-1+\ell \\ t\end{pmatrix}}} \label{eq3r26e} \\
=& \sum_{\mu=0}^{v} \Bigg(N(n-1+\ell,x_{i_{n-2}}\cdots x_{i_0},2^\ell+2\mu)* \notag \\
&2^{\Bigg( 2^\ell+2\mu-\sum\limits_{t=0}^{\ell-q-2}\begin{pmatrix} n-1+\ell \\ t\end{pmatrix}\Bigg)}*\begin{pmatrix} 2^{n-1+\ell}-(2^\ell+2\mu) \\ v-\mu\end{pmatrix}\Bigg) \label{eq3r26e} \\
 \overset{(h)}{\leq} &\sum_{\mu=0}^{v}\Bigg( N(n-1+\ell,x_{i_{n-2}}\cdots x_{i_0},2^\ell+2\mu)* \notag \\
&2^{2^\ell+2\mu+(v-\mu)(n+\ell)}\Bigg),  \label{eq3r26d}
\end{align}
where $(h)$ is from $\begin{pmatrix} 2^{n-1+\ell}-(2^\ell+2\mu) \\ v-\mu\end{pmatrix} \leq 2^{(v-\mu)(n+\ell)}$, and $\sum\limits_{t=0}^{\ell-q-2}\begin{pmatrix} n-1+\ell \\ t\end{pmatrix}=0$ since $\ell \leq q+1$.  

If $i_{n-2}=n-2+\ell$, by inductive hypothesis on $n-1$,
\begin{align}
\label{eqdd1}
&N(n-1+\ell,x_{i_{n-2}}\cdots x_{i_0},2^\ell+2\mu) \notag \\
\leq &C_\mu(n-1+\ell)2^{(2^\ell+2\mu)(n-1+\ell)}.
\end{align}

If $i_{n-2}=n-2+\ol$, $\ol < \ell$, apply (\ref{eqd6}) $\ell-\ol$ times repeatedly,
\begin{align}
\label{eqdd2}
&N(n-1+\ell,x_{i_{n-2}}\cdots x_{i_0},2^\ell+2\mu) \notag \\
=&N(n-1+\ol,x_{i_{n-2}}\cdots x_{i_0},2^{\ol}+2\frac{\mu}{2^{\ell-\ol}}).
\end{align}
If $\frac{\mu}{2^{\ell-\ol}}$ is an integer, by inductive  hypothesis on $\ol$,
\begin{align}
\label{eqdd3}
&N(n-1+\ol,x_{i_{n-2}}\cdots x_{i_0},2^{\ol}+2\frac{\mu}{2^{\ell-\ol}}) \notag \\
\leq &C_{\frac{\mu}{2^{\ell-\ol}}}(n-1+\ol)2^{(2^{\ol}+2\frac{\mu}{2^{\ell-\ol}})(n-1+\ol)} \notag \\
\leq &C_\mu(n-1+\ell)2^{(2^\ell+2\mu)(n-1+\ell)}, 
\end{align}
otherwise
\begin{align}
\label{eqdd4}
N(n-1+\ol,x_{i_{n-2}}\cdots x_{i_0},2^{\ol}+2\frac{\mu}{2^{\ell-\ol}})=0.
\end{align}
Combine (\ref{eqdd1})-(\ref{eqdd4}), we have
\begin{align}
\label{eqdd5}
&N(n-1+\ell,x_{i_{n-2}}\cdots x_{i_0},2^\ell+2\mu) \notag \\
\leq &C_\mu(n-1+\ell)2^{(2^\ell+2\mu)(n-1+\ell)}.
\end{align}
Continue the proof in (\ref{eq3r26d}), we have 
\begin{align}
\label{eq3r26}
&N(n+\ell,x_{i_{n-1}}\cdots x_{i_0},2^\ell+2v) \notag \\
\leq &\sum_{\mu=0}^{v-1}C_\mu(n-1+\ell)2^{(2^\ell+v+\mu)(n+\ell)} \notag \\
&+ C_{v}(n-1+\ell)2^{(2^\ell+2v)(n+\ell)} \notag \\
=& \left( \sum_{\mu=0}^{v-1}C_\mu \frac{n-1+\ell}{2^{(v-\mu)(n+\ell)}}+ C_v(n-1+\ell)\right) 2^{(2^\ell+2v)(n+\ell)} \notag \\
\leq &\left(\sum_{\mu=0}^{v-1}C_\mu\frac{ n+\ell}{2^{n+\ell}}+ C_v(n-1+\ell)\right) 2^{(2^\ell+2v)(n+\ell)} \notag \\
\leq &\left(\sum_{\mu=0}^{v-1}C_\mu+ C_v(n-1+\ell)\right) 2^{(2^\ell+2v)(n+\ell)} \notag \\
= &\left(1+\sum_{\mu=1}^{v-1}2^{\mu-1}+ 2^{v-1}(n-1+\ell)\right) 2^{(2^\ell+2v)(n+\ell)} \notag \\
= &C_v(n+\ell)2^{(2^\ell+2v)(n+\ell)},
\end{align}
the induction step $n-1 \rightarrow n$ holds, thus we complete the proof of (\ref{eq3r23}). 

Therefore, when $i_{r-q-1}= r-q-1+\ell$, $\ell \leq q+1$,
\begin{align}
&A(m,\xs,2^{m-r+k}) \notag \\
= &\sum_{d=2^{m-r+q}}^{2^{m-r+k}}N(m,\xs,d) \notag \\
\overset{(i)}{=}& \sum_{v=0}^{2^{k-q+\ell-1}-\lceil 2^{\ell-1} \rceil} N(r-q+\ell,\xs,2^{\ell}+2v) \notag \\
\leq &\sum_{v=0}^{2^{k-q+\ell-1}-\lceil 2^{\ell-1} \rceil}C_v(r-q+\ell)2^{(2^\ell+2v)(r-q+\ell)} \label{eq3r27d} \\ 
\overset{(j)}{\leq}& (2^{k-q+\ell-1}-\lceil 2^{\ell-1} \rceil+1)*  \notag \\
&C_{2^{k-q+\ell-1}}(r-q+\ell)2^{2^{k-q+\ell}(r-q+\ell)} \label{eq3r27d2} \\
\overset{(k)}{\leq}& 2^{k+2}C_{2^{k+2}}(r+3)2^{(2^{k+2}-1)(r+3)}, \label{eq3r27}
\end{align}
where $(i)$ is due to (\ref{eq3r22}). In $(j)$, since the sum terms in (\ref{eq3r27d}) are increasing with respect to $v$, $\forall \ v \leq 2^{k-q+\ell-1}-\lceil 2^{\ell-1} \rceil$,
\begin{align}
\label{eq3r27d1}
&C_v(r-q+\ell)2^{(2^\ell+2v)(r-q+\ell)} \notag \\
\leq &C_{2^{k-q+\ell-1}-\lceil 2^{\ell-1} \rceil}(r-q+\ell)2^{2^{k-q+\ell+2^\ell-2\lceil 2^{\ell-1} \rceil}(r-q+\ell)} \notag \\
\leq &C_{2^{k-q+\ell-1}}(r-q+\ell)2^{2^{k-q+\ell}(r-q+\ell)}. 
\end{align}
$(k)$ is due to $\ell \leq q+1$, and we take (\ref{eq3r27}) as an upper bound on (\ref{eq3r27d2}) independent of $\ell$ for the convenience of the following analysis.

If $i_{r-q-1} = r-q-1+\ell$, $q+2 \leq \ell \leq q+3$, similar results can be proved via induction, due to space limitation, we only provide inductive hypothesis when $q+2 \leq \ell \leq q+3$ without proof. 

If $\ell = q+2$, $i_{r-q-1} = r+1$,
\begin{align}
\label{eq3r31}
&N(r+2,\xs,2^{q+2}+2v) \notag \\ 
\leq &C_v(r+2)2^{(2^{q+2}+2v-1)(r+2)}.
\end{align}
The only difference between the proof of (\ref{eq3r31}) and (\ref{eq3r23}) is that in (\ref{eq3r26e}),  $\sum\limits_{t=0}^{\ell-q-2}\begin{pmatrix} n-1+\ell \\ t\end{pmatrix}=1$ when $\ell = q+2$.

Similar to (\ref{eq3r27}), we have
\begin{align}
\label{eq3r32}
&A(m,\xs,2^{m-r+k}) \notag \\
\leq &2^{k+1}C_{2^{k+1}-2^{q+1}}(r+2)2^{(2^{k+2}-1)(r+2)} \notag \\ 
\leq &2^{k+2}C_{2^{k+2}}(r+3)2^{(2^{k+2}-1)(r+3)}.
\end{align}

If $\ell = q+3$, $i_{r-q-1} = r+2$,
\begin{align}
\label{eq3r34}
&N(r+3,\xs,2^{q+3}+2v) \notag \\ 
\leq &C_v(r+3)2^{(2^{q+2}+v-1)(r+3)}.
\end{align}
The only difference between the proof of (\ref{eq3r34}) and (\ref{eq3r23}) is that in (\ref{eq3r26e}), $\sum\limits_{t=0}^{\ell-q-2}\begin{pmatrix} n-1+\ell \\ t\end{pmatrix}=n+q+3$ when $\ell = q+3$.

Similar to (\ref{eq3r27}), we have 
\begin{align}
\label{eq3r35}
&A(m,\xs,2^{m-r+k}) \notag \\
\leq &2^{k+2}C_{2^{k+2}-2^{q+2}}(r+3)2^{(2^{k+2}-1)(r+3)} \notag \\ 
\leq &2^{k+2}C_{2^{k+2}}(r+3)2^{(2^{k+2}-1)(r+3)}.
\end{align}
Combine (\ref{eq3r27}) (\ref{eq3r32}) (\ref{eq3r35}), if $i_{r-q-1} \leq r+2$,
\begin{align}
\label{eq3r36}
&\log A(m,\xs,2^{m-r+k})  \notag \\
\leq &(2^{k+2}-1)r+ \log r + O(1).
\end{align}
Now we are ready to prove (\ref{eq11r}). On the one hand, $x_{r+1} \cdots x_{k+2} \in \text{RM}(m,r)$ when $m-r \geq 2$, by (\ref{eq32}),
\begin{align}
\label{eq319}
 \log E(A(2^{m-r+k},T)) &\geq \log N(m,x_{r+1} \cdots x_{k+2},2^{m-r+k}) \notag \\
& =(2^{k+2}-1)(r-k).
\end{align}

On the other hand, by (\ref{eql2}) and (\ref{eql3}), similar to (\ref{eq22}), \\ $\sum_{q=0}^{k-1} \sum_{i_{r-q-1} \geq r+3} A(m,\xs,2^{m-r+k})$ and \\ $\sum_{i_{r-k-1} \geq r+3} N(m,\xk,2^{m-r+k})$ are negligible,
\begin{align}
\label{eq323}
& E(A(2^{m-r+k},T)) \notag \\
 \leq &\sum_{q=0}^{k-1} \sum_{i_{r-q-1} \leq r+2} A(m,\xs,2^{m-r+k})\notag \\
& + \sum_{i_{r-k-1} \leq r+2} N(m,\xk,2^{m-r+k}) + O(1) \notag \\
 \leq &\sum_{q=0}^{k-1}\begin{pmatrix} r+3 \\ q+3 \end{pmatrix}2^{k+2}C_{2^{k+2}}(r+3)2^{(2^{k+2}-1)(r+3)} \notag \\
&+\begin{pmatrix} r+3 \\ k+3 \end{pmatrix}2^{(2^{k+2}-1)(r-k)}+ O(1).  
\end{align}
Combine (\ref{eq319}) and (\ref{eq323}), we have
\begin{align}
\label{eq324}
(2^{k+2}-1)(r-k) & \leq \log E\left(  A(2^{m-r+k},T) \right) \notag \\ 
&\leq (2^{k+2}-1)r+O(\log r).
\end{align}
\end{proof}

\section{}
\label{ap5}
\begin{proof}[Proof of Theorem \ref{th2}]
Firstly, we prove 
\begin{align}
\label{eqp6}
\MN(i_s,s)=2^{i_s-s}-2^{i_s}+ \sum\limits_{t=0}^{s+1} \begin{pmatrix} i_s \\ t \end{pmatrix} \leq i_s-s+1.
\end{align}

1) If $s = 0$ or $i_s-s \leq 2$, (\ref{eqp6}) holds as an equality. 

2) If $s \geq 1$ and $i_s-s \geq 3$, by (\ref{eq13}),
\begin{align}
\label{eqp7}
\MN(i_s,s) \leq 3 < i_s-s+1.
\end{align}
Therefore, if $s \geq 1$ and $i_s-s \geq 3$,
\begin{align}
\label{eqp8}
2^{i_s-s}-2^{i_s}+ \sum\limits_{t=0}^{s+1} \begin{pmatrix} i_s \\ t \end{pmatrix} < i_s-s+1.
\end{align}
Now we are ready to prove (\ref{eqp4}), 
\begin{align}
\label{eqp5}
& E(N(2^{m-\overline{r}},T))  \notag \\ 
 = &\sum_{\xx \in \mathcal{I}} 2^{\sum\limits_{s=0}^{\overline{r}-1} \left(2^{i_s-s}-2^{i_s}\right)+ |\{I_f > I_{\xx}|f \in \mathcal{I} \}|}\notag \\
 \leq &\sum_{\xx \in \mathcal{I}} 2^{\sum\limits_{s=0}^{\overline{r}-1} \left(2^{i_s-s}-2^{i_s}\right)+ |\{I_f > I_{\xx}|deg(f) \leq \overline{r} \}|}\notag \\
 = &\sum_{\xx \in \mathcal{I}} 2^{\sum\limits_{s=0}^{\overline{r}-1} \left(2^{i_s-s}-2^{i_s}+ \sum\limits_{t=0}^{s+1} \begin{pmatrix} i_s \\ t \end{pmatrix}\right)} \notag \\
 \leq &\sum_{\xx \in \mathcal{I}} 2^{\sum\limits_{s=0}^{\overline{r}-1} (i_s-s+1)} \notag \\
 = &N(2^{m-\overline{r}}),
\end{align}  
where the first inequality holds since $\overline{r}$ is the largest degree of monomials in $\mathcal{I}$.

When $\overline{r} \leq 1$, the above two inequalities must be equalities, thus (\ref{eqp4}) holds as an equality. 

When $\overline{r} > 1$, 
$$\sum\limits_{s=0}^{\overline{r}-1} (i_s-s+1) =  \sum\limits_{s=0}^{\overline{r}-1} \left(2^{i_s-s}-2^{i_s}+ \sum\limits_{t=0}^{s+1} \begin{pmatrix} i_s \\ t \end{pmatrix}\right)$$
if and only if $i_{\overline{r}-1} \leq \overline{r}+1$, we conclude that (\ref{eqp4}) holds as an equality if and only if $\forall \ \xx \in \mathcal{I}$
\begin{align*}
(*')&  \  \ i_{\overline{r}-1} \leq \overline{r} + 1. \\
(**')&  \  \text{If} \ I_f > I_{\xx} \ \text{and} \ deg(f) \leq \overline{r}, \text{then} \ f \in \mathcal{I}.
\end{align*}
Apparently, $(*')$ is equivalent to $(*)$, and $(**')$ is equivalent to $(**)$, thus the proof is completed.
\end{proof}
\end{document}